\newtheorem{assumption}{Assumption}[section]
\newtheorem{theorem}{Theorem}[section]
\newtheorem{definition}{Definition}[section]
\newtheorem{corollary}{Corollary}[section]
\newtheorem{proposition}{Proposition}[section]
\newtheorem{remark}{Remark}[section]
\newtheorem{example}{Example}[section]
\newcommand{\range}[1]{\llbracket #1 \rrbracket}
\newcommand{\revise}[1]{{\color{black}#1}}
\DeclareMathOperator*{\essup}{ess\,sup}
\DeclareMathOperator*{\essinf}{ess\,inf}
\DeclareMathOperator*{\argmax}{arg\,max}
\DeclareMathOperator*{\argmin}{arg\,min}
\DeclareMathOperator*{\diam}{diam}
\begin{document}

\title{Development and Analysis of Deterministic Privacy-Preserving Policies Using  Non-Stochastic Information Theory
\thanks{The work of F. Farokhi was partially supported by the McKenzie Fellowship at the University of Melbourne and the VESKI Victoria Fellowship by the Victorian State Government. He also would like to thank the office of the Deputy Vice-Chancellor (Research) at the University of Melbourne for funding his current fellowship position at the university.  }
}

\author{Farhad Farokhi
\thanks{F. Farokhi is with the CSIRO's Data61 and Department of Electrical and Electronic Engineering at the University of Melbourne. e-mail: farhad.farokhi@\{unimelb.edu.au,data61.csiro.au\}}
\vspace{-.2in}
}

\maketitle

\begin{abstract} A \revise{deterministic} privacy metric using non-stochastic information theory is developed. Particularly, maximin information is used to construct a measure of information leakage, which is inversely proportional to the measure of privacy. Anyone can submit a query to a trusted agent with access to a non-stochastic uncertain private dataset. Optimal deterministic privacy-preserving policies for responding to the submitted query are computed by maximizing the measure of privacy subject to a constraint on the worst-case quality of the response (i.e., the worst-case difference between the response by the agent and the output of the query computed on the private dataset). The optimal privacy-preserving policy is proved to be a piecewise constant function in the form of a quantization operator applied on the output of the submitted query. The measure of privacy is also used to analyze $k$-anonymity (a popular deterministic mechanism for privacy-preserving release of datasets using suppression and generalization techniques), proving that it is in fact not privacy-preserving. 
\end{abstract}

\begin{IEEEkeywords}
Non-stochastic Information Theory, Maximin Information, Privacy, Piecewise Constant Function, Quantization.
\end{IEEEkeywords}

\section{Introduction}
Advances in communication and computation engineering have enabled the use of big data analysis for answering societal challenges. These advances have motivated incorporation of new tools for collection and analysis of datasets, and reporting data-driven insights. The erosion of privacy caused by the adoption of such tools has resulted in adoption of new rules by governments, such as the General Data Protection Regulation (GDPR) in the European Union, for protecting citizens, customers, and their data. 

Anonymization is most often used as a method of choice by governments or companies alike for releasing private datasets\footnote{See \url{https://data.gov.au} for an example of government initiave. Many other examples can be found in \url{https://www.kaggle.com}.} to the broader public for analysis. Although popularly adopted, anonymization has been proved to be insufficient for privacy preservation~\cite{narayanan2008robust,su2017anonymizing, de2013unique}. Therefore, systematic methods  for  privacy preservation in a provable manner should be developed.

Differential privacy and its variants, such \revise{as} local differential privacy and probabilistic differential privacy, form a category of methodologies with provable privacy guarantees~\cite{dwork2008differential,dwork2014algorithmic, duchi2013local,kairouz2014extremal,machanavajjhala2008privacy, hall2012random,padakandla2018preserving}. These methods, in summary, rely on the use of randomized policies, such as additive noise, to ensure that the statistics of the reported outputs do not change noticeably by variations in an individual entry of the dataset. This property ensures that an adversary cannot reverse-engineer differentially-private outputs to accurately estimate an individual private entry of the dataset, even in the presense of side information. Various studies have been devoted to finding ``optimal'' noise distribution in differential privacy~\cite{soria2013optimal,geng2014optimal, geng2016optimal}; however, off-the-shelf mechanisms, such as the additive Laplace and Gaussian noise with scales proportional to the sensitivity of the submitted query with respect to \revise{the} individual entries of the dataset, are often used to ensure differential privacy~\cite{dwork2014algorithmic}. Note that the use of randomized policies for privacy protection in itself is not particularly new~\cite{warner1965randomized} but, prior to differential privacy, provable guarantees were often missing. 

Another methodology for privacy protection is the use information theoretic metrics dating back to the pioneering work on secrecy in~\cite{6772207}. In \revise{the} secrecy problem, a sender wishes to devise an encoding scheme to create a secure channel for communicating with a receiver while hiding her data from an eavesdropper (similar to the setup of encryption). The privacy problem with the emphasis on masking or equivocating of information from the intended primary receiver (rather than an eavesdropper) or a secondary receiver with as much information as the primary reciever have been studied in~\cite{sankar2013utility,courtade2012information,yamamoto1983source, yamamoto1988rate}.  Information-theoretic guarantees have been also provided on the amount of leaked private information when utilizing differential privacy~\cite{Aceto2011,du2012privacy}. Furthermore, entropy, mutual information, Kulback-Leiber divergence, and Fisher information have been  repeatedly used as measure of privacy in~\cite{farokhi2015quadratic,farokhi2016privacy,wainwright2012privacy, liang2009information,lai2011privacy,li2015privacy,bassi2018lossy, farokhi2018fisher}. 

A common thread or assumption among all these methodologies is that they utilize randomization for safeguarding privacy. In fact, the definition of differential privacy assumes the use of randomized functions and information theoretic tools used so far have been based on randomized random variables. However, many popular\footnote{\revise{Popularity of these methods is somehwat evident from the sheer number of available toolboxes for implementation~\url{https://arx.deidentifier.org/overview/related-software/}}} heuristic-based privacy-preserving methods, such as $k$-anonymity~\cite{samarati2001protecting,sweeney2002k} and $\ell$-diversity~\cite{1617392}, are deterministic \revise{(i.e., deterministic mappings, such as suppression and generalization, applied to non-stochastic datasets)}.  

\revise{Randomized, or stochastic, privacy-preserving policies have been shown to cause problems, such as un-truthfulness~\cite{bild2018safepub}, which can be undesirable in practice~\cite{Poulis2015}. This is perhaps one of the reason behind low popularity of randomized privacy-preserving policies, such as differential privacy, within the financial or health sectors~\cite{bild2018safepub}. For instance, randomized privacy-preserving policies in financial auditing have been criticized for complicating fraud detection~\cite{bhaskar2011noiseless,nabar2006towards}. Also, generation of unreasonable and unrealistic outputs by randomness can cause undesirable financial outcomes (e.g., missleading investors or market operators by reporting noisy outputs that point to lack of liquidity in a bank). Randomized privacy-preserving policies, in general, have also encountered difficulties in medical, health, or social sciences~\cite{dankar2013practicing,Mervis114}. Finally, undesirable properties of differentially-private additive noise, especially the Laplace noise, might make it less appealing. For instace, optimal variable estimation in the presence of privacy-preserving Laplace noise is computationally expensive~\cite{farokhi2016optimal} and probability of returning impossible reports (e.g., negative median income) could be relatively high due to slow-decaying nature of Laplace noise~\cite{bambauer2013fool}.} 

\revise{In addition to the aforementioned difficulties or negative consequences associated with randomized policies, the popularity of non-stochastic methods might also be caused by the simplicity of implementing deterministic policies, in the sense of not requiring a working knowledge of random variables and their generation by laymen.  Deterministic privacy-preserving policies and non-stochastic measures of information leakage, if designed correctly, can also provide concrete guarantees regarding the amount of the information that can be inferred about each instance of the private dataset, rather than stochastic measures of privacy that only provide guarantees in average (i.e., in a statistical sense).}

So far, deterministic privacy-preserving policies are generated in an \textit{ad hoc} manner and are often vulnerable to attacks (e.g., $k$-anonimity has been proved to be \revise{vulnerable} to attacks, such as homogeneity attack~\cite{1617392}). This is because there is no good measure of privacy that works for deterministic policies on deterministic datasets. Therefore, one cannot prove (in some sense) privacy guarantees of the methods (even if weak or limited in scope or practice). The  popularity of non-stochastic privacy-preserving policies justifies requiring a metric for their analysis and comparison (irrespective of their inherent philosophical weaknesses in comparison to stochastic policies).

Motivated by this observation, in this paper, a deterministic privacy metric based on non-stochastic information theory is developed. Traditional information theory, starting with Shannon's seminal work in~\cite{shannon1948mathematical}, usually assumes that data (source) and communication channels are stochastic in nature. This has been proved to be extremely powerful in modelling and analysing communication systems; see, e.g.,~\cite{Elements2006} and references there-in. However, the notion of information within the traditional information theory literature, such as mutual information, is not useful for analysing non-stochastic uncertain variables \revise{(an analogue of random variables but without a probability measure)} and deterministic privacy-preserving policies. \revise{This is because such definitions require a probability density function to exist for variables, which is not the case in the absence of additive privacy-preserving noise (with a known probability density function) or stochasticity assumptions on the private dataset.  }

\revise{T}here is a parallel less-studied (within tertiary colleges) theory of non-stochastic information theory~\cite{hartley1928transmission, kolmogorov1959varepsilon,renyi1961measures, nair2013nonstochastic,jagerman1969varepsilon}, which has been recently used within engineering~\cite{nair2012nonstochastic, duan2015transfer,wiese2016uncertain}. \revise{Non-stochastic information theory relies on uncertain variables and extension of analogues of probabilistic ideas, such as independence. Non-stochastic information theory is not equivalent to treating input variables with known, bounded ranges as uniformly distributed random variables because such an approach is still probabilistic, and the output random variables may exhibit non-uniform distributions despite the uniform inputs. In contrast, in the uncertain variable model, only the support sets are considered, and no distributions are derived at any stage. }

In this paper,  non-stochastic measures of information, such as maximin information, from the non-stochastic \revise{information} theory literature are used to develop a measure of privacy. Anyone can submit a query to a trusted agent with access to a non-stochastic uncertain private dataset. \revise{An optimization problem is posed to maximize the measure of privacy subject to a constraint on the worst-case quality of the response (i.e., the worst-case difference between the response by the agent and the output of the query computed on the private dataset). The solution to the optimization problem captures the optimal deterministic privacy-preserving policies for responding to submitted queries.}  The optimal privacy-preserving policy is in fact \revise{proved to be} a quantization operator applied on the output of the submitted query computed based on the private dataset. The developed measure of privacy is utilized to analyze the \revise{privacy credentials} of $k$-anonymity, proving that it is not privacy-preserving, which was previously observed using \revise{adversarial} attacks in~\cite{1617392}. 

The rest of the paper is organized as follows. Section~\ref{sec:nonstoinfo} provides a summary of non-stochastic information theory. The problem formulation is presented in \revise{Section~\ref{sec:problem}}. In Section~\ref{sec:privacypolicy}, a piecewise constant function, in the form of a quantization operator, is proved to be an optimal privacy-preserving policy. The privacy of $k$-anonymity is analyzed using the proposed non-stochastic privacy metrics in Section~\ref{sec:kanonimity}. Finally, Section~\ref{sec:conclusions} concludes the paper and presents future directions for work.

\section{Non-stochastic Information Theory} \label{sec:nonstoinfo}
\revise{In this section, an overview of non-stochastic information theory is presented. First, uncertain variables, which are analogues of random variables but without a probability measure, are introduced. Then, various measures of information, i.e., non-stochastic information based on R\'{e}nyi differential $0$-entropy, non-stochastic information leakage, and maximin information are presented. }

\subsection{Uncertain Variables}
Consider sample space $\Omega$. Each element $\omega\in \Omega$ is referred to as a sample. The sample space is the source of uncertainty. Any mapping $X:\Omega\rightarrow\mathbb{X}$ defines an uncertain variable. A realization of such a variable is $X(\omega)$. However, for the ease of presentation, $X(\omega)$ is replaced by $X$ when the dependence of the uncertain variable to the sample is evident from the context. Up to this point, the difference between uncertain variables and random variables is the absence of a measure on the space $\Omega$. Throughout this paper, it is assumed that all uncertain variables are real-valued, i.e., $\mathbb{X}\subseteq\mathbb{R}^{n_x}$ for some $n_x\in\mathbb{N}$. Marginal range of $X$ is defined as 
$$
\range{X}:=\{X(\omega):\omega\in\Omega \}\subseteq \mathbb{X}.
$$
Joint range of two uncertain variables $X:\Omega\rightarrow\mathbb{X}$ and $Y:\Omega\rightarrow\mathbb{Y}$ is 
$$
\range{X,Y}:=\{(X(\omega),Y(\omega)):\omega\in\Omega \}\subseteq \mathbb{X}\times \mathbb{Y}.
$$ 
Finally, conditional range of $X$ (conditioned on the observation of another uncertain variable $Y(\omega)=y$) is given by  
$$
\range{X|y}:=\{X(\omega):\exists \omega\in\Omega  \mbox{ \revise{such that }} Y(\omega)=y\}\subseteq \range{X}. 
$$
The family of all conditional ranges is denoted by 
$$
\range{X|Y}:=\{\range{X|y}:y\in \range{Y}\}\subseteq 2^{\range{X}}.
$$
This should not be mistaken with the union of all such conditional ranges given by $\bigcup_{y\in \range{Y}}\range{X|y}= \range{X}$. In fact, regarding the union, it can be proved that $\bigcup_{y\in \range{Y}} \range{X|y}\times\{y\}=\range{X,Y}.$ 

\begin{definition}[Unrelatedness]
Uncertain variables $X_i$, $i=1,\dots,n$, are unrelated if $\range{X_1,\dots,X_n}=\range{X_1}\times \cdots \times \range{X_n}.$ Further, they are  conditionally unrelated (conditional on $Y$) if 
$\range{X_1,\dots,X_n|y}=\range{X_1|y}\times \cdots \times \range{X_n|y}$ for all $y\in\range{Y}$.
\end{definition}

For two uncertain variables, this definition is equivalent to stating that $X_1$ and $X_2$ are unrelated if $\range{X_1|x_2}=\range{X_1}, \forall x_2\in\range{X_2},$ and \textit{vice versa}. 
Again, for two uncertain variables, this definition is equivalent as saying that $X_1$ and $X_2$ are conditionally unrelated (conditional on $Y$) if
$\range{X_1|x_2,y}=\range{X_1|y}, \;\forall (x_2,y)\in\range{X_2,Y},$
Finally, for uncertain variables $X$ and $Y_i$, $i=1,\dots,n$, it can be seen that $\range{X|y_1,\dots,y_n}\subseteq \bigcap_{i=1}^n \range{X|y_i},$
where the inequality is achieved\revise{, i.e., $\range{X|y_1,\dots,y_n}= \bigcap_{i=1}^n \range{X|y_i}$,} if $Y_i$, $i=1,\dots,n$, are unrelated conditional on $X$.

\subsection{Non-stochastic Entropy and Information}
The non-stochastic entropy of uncertain variable $X$ can be defined as
\begin{align} \label{eqn:firstentropys}
h_0(X):=\log(\mu(\range{X}))\in \overline{\mathbb{R}},
\end{align}
where \revise{$\mu$ is the Lebesgue measure,} $\overline{\mathbb{R}}$ is the extended real line $\mathbb{R}\cup\{\pm \infty\}$\revise{,} and the logarithm can be taken in any basis. In line with the differential entropy for random variables, the logarithm is in the natural basis throughout the rest of the paper. The non-stochastic entropy in~\eqref{eqn:firstentropys} is sometimes referred to as R\'{e}nyi differential $0$-entropy~\cite{nair2013nonstochastic}.

\begin{remark}[$\varepsilon$-entropy]
This notion of R\'{e}nyi differential $0$-entropy is intimately related to the $\varepsilon$-entropy~\cite{kolmogorov1959varepsilon} defined as $h_\epsilon(X):=\log(N_\varepsilon(\range{X})),$ where $N_\epsilon(\cdot)$ is the smallest number of sets of diameter $2\varepsilon$ \revise{such} that their union covers $\range{X}$, referred to as the minimal $\varepsilon$-covering. The inequality $\varepsilon^{n_x} N_\varepsilon(\range{X})\leq \mu(\range{X})\leq (2\varepsilon)^{n_x} N_\varepsilon(\range{X})$ implies that $0 \leq h(X)-[h_\varepsilon(X)+{n_x} \log(\varepsilon)]\leq {n_x} \log(2).$ These two notions of entropy are similar. 
\end{remark}

Similarly,  the non-stochastic relative (or conditional) entropy of uncertain variable $X$ conditioned on  uncertain variable  $Y$ can be defined as
\begin{align}
h_0(X|Y):=\essup_{y\in \range{Y} } \log(\mu(\range{X|y})),
\end{align}
where, for any real-valued function $f:\mathcal{X}\rightarrow\mathbb{R}$ for some $\mathcal{X}\subseteq\mathbb{R}^m$, the essential supremum is defined as
\begin{align*}
\essup_{x\in \mathcal{X}} f(x):=\inf\{b\in\mathbb{R}:\mu(\{x\in\mathcal{X}:f(x)> b\})=0\}.
\end{align*}
Based on the definition of entropy, the non-stochastic information between two uncertain variables $X$ and $Y$ can also be defined as
\begin{align}
I_0(X;Y):=&h_0(X)-h_0(X|Y)\nonumber\\
=&\essinf_{y\in \range{Y} } \log\left(\frac{\mu(\range{X})}{\mu(\range{X|y})} \right).\label{eqn:information1}
\end{align} 
Note that Kolmogorov had defined `combinatorial' conditional entropy using $\log(\mu(\range{X|y}))$ and the measure of information gain was defined as $\mu(\range{X})/\mu(\range{X|y})$ in~\cite{kolmogorov1959varepsilon}. These quantities are only defined for an observed value of uncertain variable $Y = y$; however, the definition in~\eqref{eqn:information1} relies on the worst-case ratio. 

Now, a non-stochastic version of Fano's inequality can be established. Let the uncertain variable $\hat{X}(y)$ denote an estimate of an uncertain variable $X$ based on uncertain variable $Y$ for measurement $Y=y$. In this paper,  only  unbiased estimators, defined below, are considered.

\begin{assumption}[Unbiased Estimator] \label{assum:unbiased} An estimator $\hat{X}:\range{Y}\rightarrow\range{X}$ is unbiased if $\hat{X}(y)\in \range{X|y}$. 
\end{assumption}

This essentially means that the estimate is consistent with the received measurement, i.e., $X,\hat{X}(y)\in \range{X|y}$. A measure of the quality of the estimate can be defined as
\begin{align}
d_{\max}(X,\hat{X}(Y)):=&\essup_{y\in\range{Y}} \essup_{x\in\range{X|y}}\|x-\hat{X}(y)\|_2.
\end{align}
This measure captures the largest worst-case distance between uncertain variable $X$ and its estimate. Before stating the following theorem, a notation needs to be defined. Let $\Gamma:z\mapsto \int_0^\infty x^{z-1}\exp(-x)\mathrm{d}x$ be the Gamma function (extension of factorial to real numbers).

\begin{theorem} \label{tho:Fano} Consider $X$ and $Y=f(X)$ are uncertain variables for some function $f:\range{X}\rightarrow\range{Y}$. Assume that $\range{X|y}$ is a Borel set for all $y\in\range{Y}$. Then,
\begin{align*}
\frac{\Gamma(n_x/2+1)^{1/n_x}}{\sqrt{\pi}}\exp\bigg( \frac{h_0(X|Y)}{n_x} \bigg)&\leq d_{\max}(X,\hat{X}(Y)).
\end{align*}
\end{theorem}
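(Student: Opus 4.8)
The plan is to reduce the statement to a purely geometric volume comparison for each fixed conditioning value $y$, and then lift that pointwise bound to the essential suprema defining $h_0(X|Y)$ and $d_{\max}$. The starting observation is that, for a fixed $y\in\range{Y}$, the estimate $\hat{X}(y)$ is a single deterministic point in $\mathbb{R}^{n_x}$, and the inner essential supremum
\[
r(y):=\essup_{x\in\range{X|y}}\|x-\hat{X}(y)\|_2
\]
is, by definition of the essential supremum, the smallest radius such that the set of points of $\range{X|y}$ lying outside the closed ball $B(\hat{X}(y),r(y))$ carries zero Lebesgue measure. Since $\range{X|y}$ is assumed Borel (hence measurable), this yields $\mu(\range{X|y})=\mu(\range{X|y}\cap B(\hat{X}(y),r(y)))\le \mu(B(\hat{X}(y),r(y)))$. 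Note that the functional relation $Y=f(X)$ and the unbiasedness of Assumption~\ref{assum:unbiased} serve mainly to make $d_{\max}$ a meaningful estimation error; the inequality below holds for the center $\hat{X}(y)$ regardless of whether it lies in $\range{X|y}$.

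First I would invoke the closed-form volume of a Euclidean ball, $\mu(B(c,r))=\pi^{n_x/2}r^{n_x}/\Gamma(n_x/2+1)$, which is valid for any center $c$, so that
\[
\mu(\range{X|y})\le \frac{\pi^{n_x/2}}{\Gamma(n_x/2+1)}\,r(y)^{n_x}.
\]
Taking logarithms and dividing by $n_x$ turns this into
\[
\tfrac{1}{n_x}\log\mu(\range{X|y})\le \tfrac{1}{2}\log\pi-\tfrac{1}{n_x}\log\Gamma(n_x/2+1)+\log r(y),
\]
which isolates $\log r(y)$ on the right and the per-coordinate log-measure on the left, with the remaining terms an additive constant independent of $y$.

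Next I would take the essential supremum over $y\in\range{Y}$ of both sides, using that the pointwise inequality holds for every $y$. Because the additive constant does not depend on $y$, and because $t\mapsto\log t$ is continuous and strictly increasing, the essential supremum commutes with these operations, giving $\essup_y \log r(y)=\log\essup_y r(y)=\log d_{\max}(X,\hat{X}(Y))$ and $\essup_y \tfrac{1}{n_x}\log\mu(\range{X|y})=\tfrac{1}{n_x}h_0(X|Y)$. Combining yields $\tfrac{1}{n_x}h_0(X|Y)\le \tfrac{1}{2}\log\pi-\tfrac{1}{n_x}\log\Gamma(n_x/2+1)+\log d_{\max}$; exponentiating and rearranging reproduces exactly the claimed bound. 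The degenerate cases are automatic: $d_{\max}=\infty$ makes the inequality trivial, while $d_{\max}=0$ forces $\mu(\range{X|y})=0$ for almost every $y$ and hence $h_0(X|Y)=-\infty$, so both sides vanish.

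The main obstacle, and the step deserving the most care, is the handling of essential suprema rather than ordinary suprema: one must verify that the null sets on which $B(\hat{X}(y),r(y))$ misses part of $\range{X|y}$ do not spoil the volume comparison, that the pointwise-in-$y$ inequality survives the passage to $\essup_y$, and that $\essup$ genuinely commutes with the monotone continuous maps ($\log$ and positive scaling) invoked above. The only other ingredient is measurability, which is precisely what the Borel hypothesis on each $\range{X|y}$ secures, guaranteeing that $\mu(\range{X|y})$ is well defined and that the ball volume bound is legitimate.
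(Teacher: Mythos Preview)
Your argument is correct and in fact more direct than the paper's. The paper first lower-bounds $d_{\max}$ by $\tfrac{1}{2}\essup_{y}\diam(\range{X|y})$ (using that any ball centered at $\hat{X}(y)$ covering $\range{X|y}$ must have radius at least half the diameter), and then invokes the relationship between Hausdorff measure and Lebesgue measure---essentially the isodiametric inequality $\mu(A)\le \pi^{n_x/2}(\diam A/2)^{n_x}/\Gamma(n_x/2+1)$---to pass from diameter to $\mu(\range{X|y})$. You bypass both steps by observing directly that $\range{X|y}$ lies (up to a null set) inside the ball $B(\hat{X}(y),r(y))$, so plain monotonicity of Lebesgue measure yields the same volume bound with the same constant. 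Your route avoids the nontrivial isodiametric inequality entirely; the paper's route, on the other hand, makes explicit that the bound is really a statement about the \emph{size} of $\range{X|y}$ independent of the particular estimator, since it factors through the Chebyshev radius and diameter.
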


\begin{proof} 
Further note that
\begin{align*}
\essup_{y\in\range{Y}} \essup_{x\in\range{X|y}}&\|x-\hat{X}(y)\|_2\\
\geq &\essup_{y\in\range{Y}}\inf_{\hat{X}}\essup_{x\in\range{X|y}} \|x-\hat{X}(y)\|_2
\\
\geq &\essup_{y'\in\range{Y}} \frac{1}{2}\diam(\range{X|y'}),
\end{align*}
where the last inequality follows from the fact that $\essup_{x\in\range{X|y'}} \|x-\hat{X}(y')\|_2$ is the radius of a ball that encompasses $\range{X|y'}$ and is centred at $\hat{X}(y')\in\range{X|y'}$ (see Assumption~\ref{assum:unbiased}) and the smallest such radius is always larger than or equal to half of the diameter. Therefore,
\begin{align*}
\essup_{x\in\range{X}} \essup_{y:x\in\range{X|y}}\|x-\hat{X}(y)\|_2
\geq \frac{1}{2} \essup_{y\in\range{Y}}\mathfrak{H}^{n_x}(\range{X|y}).
\end{align*}
The last step follows from the relationship between the Hausdorff measure $\mathfrak{H}^{n_x}(\cdot)$ and the Lebesgue measure $\mu(\cdot)$ for Borel sets~\cite[p\,28-30]{ambrosio2004topics}. This completes the proof.
\end{proof}

\begin{example}\label{example:1}
The notions of non-stochastic information and relative entropy are not useful for measuring privacy leakage. This is because it considers the \revise{size of the largest} $\mu(\range{X|y})$, while privacy wants to ensure that all $\mu(\range{X|y})$ are large. To see this, consider the following example:
\begin{align*}
f(X):=
\begin{cases}
X, & 0\leq X< 1/2,\\
1, & \mbox{otherwise},
\end{cases}
\end{align*}
where $X$ is an uncertain variable with $\range{X}=[0,1]$. It is easy to show that $h_0(X|f(X))=\log(1/2)$\revise{; note that $h_0(X)=\revise{0}$}. Construct an estimator of the form
\begin{align*}
\hat{X}(Y):=
\begin{cases}
Y, & 0\leq Y< 1/2,\\
3/4, & \mbox{otherwise},
\end{cases}
\end{align*}
Note that $d_{\max}(X,\hat{X}(f(X)))=1/4$ attaining the lower bound in Theorem~\ref{tho:Fano} (as $\Gamma(3/2)=\sqrt{\pi}/2$), proving that $\hat{X}(\cdot)$ is optimal in the sense of minimizing $d_{\max}(X,\hat{X}(f(X)))$. The function $f(\cdot)$ is clearly not privacy-preserving as $f(X)=X$ for many inputs! In fact, $\inf_{y\in\range{Y}}\mu(\range{X|y})=0$. 
\end{example}

Therefore, a notion of \textit{relative disarray} can be defined:
\begin{align}
d_0(X|Y):=\inf_{y\in\range{Y}}\log(\mu(\range{X|y})).
\end{align}
Following this, \textit{non-stochastic information leakage} can be defined as
\begin{align}
L_0(X;Y):=h_0(X)-d_0(X|Y).
\end{align}
Another useful measure of the quality of an estimator is
\begin{align}
d_{\min}(X,\hat{X}(Y)):=\essinf_{y\in\range{Y}} \essup_{x\in\range{X|y}}\|x-\hat{X}(y)\|_2.
\end{align}
This measure captures the smallest worst-case distance between uncertain variable $X$ and its estimate. If $d_{\min}(X,\hat{X}(Y))$ is small, it means that there exist some values for uncertain variable $X$ for which the privacy is not preserved in the sense that an adversary can reconstruct $X$ for those values accurately based on $Y$. 

\begin{theorem} \label{tho:Fano_1} Consider $X$ and $Y=f(X)$ are uncertain variables for some function $f:\range{X}\rightarrow\range{Y}$. Assume that $\range{X|y}$ is a Borel set for all $y\in\range{Y}$. Then,
\begin{align*}
\frac{\Gamma(n_x/2+1)^{1/n_x}}{\sqrt{\pi}}\exp\bigg( \frac{d_0(X|Y)}{n_x} \bigg)
&\leq d_{\min}(X,\hat{X}(Y)).
\end{align*}
\end{theorem}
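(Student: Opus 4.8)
The plan is to mirror the proof of Theorem~\ref{tho:Fano} essentially verbatim, replacing the outermost $\essup_{y\in\range{Y}}$ by $\essinf_{y\in\range{Y}}$ and tracking how this single change propagates to the final step. First I would fix an arbitrary unbiased estimator $\hat{X}$ and establish the pointwise-in-$y$ lower bound $\essup_{x\in\range{X|y}}\|x-\hat{X}(y)\|_2\geq \tfrac{1}{2}\diam(\range{X|y})$. This is exactly the radius-versus-diameter argument used before: since $\hat{X}(y)\in\range{X|y}$ by Assumption~\ref{assum:unbiased}, the quantity $\essup_{x\in\range{X|y}}\|x-\hat{X}(y)\|_2$ is the radius of a ball centred inside $\range{X|y}$ that covers $\range{X|y}$, and any ball covering a set has radius no smaller than half the set's diameter. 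Because this bound holds for each $y$, taking $\essinf_{y\in\range{Y}}$ of both sides yields $d_{\min}(X,\hat{X}(Y))\geq \essinf_{y\in\range{Y}}\tfrac{1}{2}\diam(\range{X|y})$.

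Next I would convert diameters into Lebesgue measures using the isodiametric inequality, i.e., the relationship between the Hausdorff measure $\mathfrak{H}^{n_x}$ and $\mu$ for Borel sets already invoked in Theorem~\ref{tho:Fano}. Concretely, for each Borel set $\range{X|y}$ one has $\mu(\range{X|y})\leq (\pi^{n_x/2}/\Gamma(n_x/2+1))(\diam(\range{X|y})/2)^{n_x}$, which rearranges to the pointwise bound $\tfrac{1}{2}\diam(\range{X|y})\geq (\Gamma(n_x/2+1)^{1/n_x}/\sqrt{\pi})\,\mu(\range{X|y})^{1/n_x}$. Substituting this and pulling the positive constant out of the essential infimum gives
$$d_{\min}(X,\hat{X}(Y))\geq \frac{\Gamma(n_x/2+1)^{1/n_x}}{\sqrt{\pi}}\essinf_{y\in\range{Y}}\mu(\range{X|y})^{1/n_x}.$$

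The only genuinely new step is reconciling the essential infimum appearing in $d_{\min}$ with the plain infimum used in the definition of $d_0(X|Y)$. Here I would invoke $\essinf_{y}g(y)\geq \inf_{y}g(y)$, valid for any $g$ and, crucially, oriented in the direction needed for a lower bound. Finally, since $t\mapsto t^{1/n_x}$ (equivalently $t\mapsto \exp(t/n_x)$ after passing to logarithms) is continuous and strictly increasing, it commutes with the infimum, so $\inf_{y}\mu(\range{X|y})^{1/n_x}=(\inf_{y}\mu(\range{X|y}))^{1/n_x}=\exp(d_0(X|Y)/n_x)$. Chaining the displayed inequalities then delivers the claimed bound.

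I do not anticipate a serious obstacle, precisely because the skeleton is inherited from Theorem~\ref{tho:Fano}. The one place demanding attention is the $\essinf$-versus-$\inf$ mismatch just described: one must confirm that $\essinf\geq\inf$ is the correctly oriented inequality for a lower bound on $d_{\min}$ (it is), rather than discarding a measure-zero set in the wrong direction. A secondary point to verify is that the Borel-set hypothesis is exactly what licenses both the isodiametric inequality and the identification of $\mathfrak{H}^{n_x}$ with $\mu$, just as in the preceding theorem.
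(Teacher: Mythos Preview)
Your proposal is correct and follows essentially the same approach as the paper: replace the outer $\essup$ by $\essinf$ in the chain from Theorem~\ref{tho:Fano}, invoke the radius-versus-diameter bound, and finish via the Hausdorff--Lebesgue (isodiametric) relation. Your explicit handling of the $\essinf$-versus-$\inf$ mismatch is a small refinement the paper leaves implicit, but the skeleton is identical.
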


\begin{proof} The proof follows the same line of reasoning as in the proof of Theorem~\ref{tho:Fano}. Note that, 
\begin{align*}
\essinf_{y\in\range{Y}} \essup_{x\in\range{X|y}}&\|x-\hat{X}(y)\|_2\\
\geq &\essinf_{y\in\range{Y}}\essinf_{\hat{X}} \essup_{x\in\range{X|y}}\|x-\hat{X}(y)\|_2\\
\geq  &\essinf_{y\in\range{Y}} \frac{1}{2}\diam(\range{X|y}).
\end{align*}
This completes the proof.
\end{proof}

\setcounter{example}{0}
\begin{example}[Cont'd]
In this example, $d_0(X|f(X))=-\infty$ (by the convention that $\log(0)=\lim_{t\searrow 0}\log(t)=-\infty$) and $L_0(X;f(X))=+\infty$. Hence, non-stochastic information leakage $L_0(X;f(X))$ can accurately capture the fact that $f(X)$ is not privacy preserving. In addition, it can be seen that $d_{\min}(X,\hat{X}(Y))=0$, which proves that again $\hat{X}(Y)$ is optimal in the sense of the cost function $d_{\min}(X,\hat{X}(Y))$ (as the lower bound in Theorem~\ref{tho:Fano_1} is achieved). 
\end{example}

In general, the non-stochastic information $I_0(\cdot;\cdot)$ and non-stochastic information leakage $L_0(\cdot;\cdot)$ are not symmetrical, that is, $I_0(X;Y)\neq I_0(Y;X)$ and $L_0(X;Y)\neq L_0(Y;X)$ in general (contrary to mutual information in the information theory literature). \revise{Many measures of information have been introduced in the past for stochastic variables that are  asymmetric and have been proved to be useful in practice~\cite{massey1994guessing,smith2009foundations,braun2009quantitative, m2012measuring}. However, symmetry enables proving useful relaxations; see Proposition~\ref{prop:relaxed} in the next section.} A non-stochastic information transmission was proposed in~\cite{klir2005uncertainty}, defined as
\begin{align}
T_0(X;Y):=h_0(X)+h_0(Y)-h_0(X,Y).
\end{align}
This new measure of information is symmetric, that is, $T_0(X;Y)=T_0(Y;X)$.  Although being symmetric in general, utilization of this measure is not possible (because $\range{Y}$ can be a discrete set $\mu(\range{Y})=0$ and thus $h_0(X,Y)=0$ in all such cases). Another symmetric measure of information is the maximin information. In order to define this measure of information, the notion of taxicab connectivity must be defined\revise{, borrowed from the pioneering works in~\cite{nair2012nonstochastic,nair2013nonstochastic} on non-stochastic information theory.}

\begin{definition}[Taxicab Connectivity]\hfill\break\vspace{-1em}
\begin{itemize}
\item $(x,y),(x',y')\in\range{X,Y}$ are taxicab\footnote{The term refers to taxis/cabs in New York in which they connect two intersections by a sequence of horizontal or vertical moves.} connected if there exists a sequence of points $\{(x_i,y_i)\}_{i=1}^n\subseteq\range{X,Y}$ such that $(x_1,y_1)=(x,y)$, $(x_n,y_n)=(x',y')$, and either $x_i=x_{i-1}$ or $y_i=y_{i-1}$ for all $i\in\{2,\dots,n\}$;
\item $\mathcal{A}\subseteq\range{X,Y}$ is taxicab connected if all points in $\range{X,Y}$ are taxicab connected;
\item \revise{$\mathcal{A},\mathcal{B}\subseteq\range{X,Y}$ are taxicab isolated if there do not exist points $(x,y)\in\mathcal{A}$ and $(x',y')\in\mathcal{B}$ such that $(x,y)$ and $(x',y')$ are taxicab connected;}
\item A taxicab partition of $\range{X,Y}$ is a set of sets $\mathfrak{F}(X,Y):=\{\mathcal{A}_i\}_{i=1}^n$ such that $\range{X,Y}\subseteq\bigcup_{i=1}^n\revise{\mathcal{A}_i}$, any  $\mathcal{A}_i,\mathcal{A}_j$ are  taxicab isolated if $j\neq i$, and $\mathcal{A}_i$ is taxicab connected for all $i$.
\end{itemize}
\end{definition}

There exists a unique taxicab partition for any $\range{X,Y}$~\cite{nair2013nonstochastic}. Maximin information can be defined as
\begin{align}
I_\star(X;Y):=\log(|\mathfrak{F}(X,Y)|),
\end{align}
where $\mathfrak{F}(X,Y)$ denotes the unique taxicab partition of $\range{X|Y}$. It has been proved that $|\mathfrak{F}(X,Y)|=|\mathfrak{F}(Y,X)|$ and thus $I_\star(X;Y)=I_\star(Y;X)$ resulting in a symmetric notion of information~\cite{nair2013nonstochastic}. 

\setcounter{example}{0}
\begin{example}[Cont'd]
In this example, $I_\star(X;f(X))=+\infty$. This instantly shows that $f(X)$ is not privacy preserving.
\end{example}

\revise{\section{Problem Formulation} \label{sec:problem}}
 In what follows, it is assumed that a private dataset $X$ is available to a secure trusted agent. Anyone may submit a query of the form $f(\cdot)$, i.e., it can request that the trusted agent compute  and provide the response $f(X)$. 

\begin{definition}[Measure of Privacy] Let $\revise{\tilde{f}}(\cdot)$ be a reporting function and define uncertain variable $Y$ based on uncertain variable $X$ such that $Y=\revise{\tilde{f}}(X)$. Then, the measure of privacy for the reporting function $\revise{\tilde{f}}$ is
\begin{subequations}\label{eqn:define:P12}
\begin{align} \label{eqn:define:P1}
\revise{\mathcal{P}}_1(\revise{\tilde{f}}):=&\frac{1}{L_0(X;Y)},\\
 \label{eqn:define:P2}
\revise{\mathcal{P}}_2(\revise{\tilde{f}}):=&\frac{1}{I_\star(X;Y)}.
\end{align}
\end{subequations}
\end{definition}

The inverse relationship  between the measures of privacy and information in~\eqref{eqn:define:P12} is because information leakage reduces the privacy guarantee. A useful and intuitive property for the aforementioned measures of privacy can be proved to illustrate that after releasing an output it is not possible to gain more information from the data by additional manipulations.

\begin{theorem}[Post Processing] $\revise{\mathcal{P}}_i(g\circ f)\geq \revise{\mathcal{P}}_i(f)$ \revise{for both $i=1,2$}.
\end{theorem}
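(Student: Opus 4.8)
The plan is to observe that both inequalities are instances of a single data-processing phenomenon: writing $Y=f(X)$ and $Z=(g\circ f)(X)=g(Y)$, the claim $\mathcal{P}_i(g\circ f)\ge \mathcal{P}_i(f)$ is equivalent (since the privacy measures are reciprocals of the corresponding nonnegative information measures) to the two statements $L_0(X;Z)\le L_0(X;Y)$ and $I_\star(X;Z)\le I_\star(X;Y)$. The structural fact that drives everything is that conditioning on the coarser variable $Z$ merges conditional ranges: for each $z\in\range{Z}$,
\[
\range{X|z}=\bigcup_{\substack{y\in\range{Y}\\ g(y)=z}}\range{X|y},
\]
which follows from $Z(\omega)=z\iff Y(\omega)\in g^{-1}(\{z\})$. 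First I would establish this identity directly from the definition of the conditional range.

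For the leakage measure $\mathcal{P}_1$, I would note that $h_0(X)$ does not depend on the reporting function, so it suffices to prove $d_0(X|Z)\ge d_0(X|Y)$. Fixing any $z\in\range{Z}$ and choosing some $y$ with $g(y)=z$, the union identity gives $\range{X|z}\supseteq\range{X|y}$, whence monotonicity of the Lebesgue measure yields $\log(\mu(\range{X|z}))\ge \log(\mu(\range{X|y}))\ge d_0(X|Y)$. Since $z$ was arbitrary, taking the infimum over $\range{Z}$ gives $d_0(X|Z)\ge d_0(X|Y)$, hence $L_0(X;Z)\le L_0(X;Y)$, which is the desired inequality for $i=1$.

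For the maximin measure $\mathcal{P}_2$, I would prove the combinatorial bound $|\mathfrak{F}(X,Z)|\le|\mathfrak{F}(X,Y)|$ by pushing the taxicab structure of $\range{X,Y}$ forward along the surjection $\Phi:\range{X,Y}\to\range{X,Z}$, $(x,y)\mapsto(x,g(y))$. I would first show $\Phi$ preserves taxicab connectivity: given a taxicab path $\{(x_i,y_i)\}_i$ in $\range{X,Y}$, consecutive points agree either in the $x$-coordinate (so their images agree in $x$) or in the $y$-coordinate (so $g(y_i)=g(y_{i-1})$ and their images agree in the second coordinate), whence $\{(x_i,g(y_i))\}_i$ is a taxicab path in $\range{X,Z}$. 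Consequently, if $\{\mathcal{A}_j\}_{j=1}^m$ is the unique taxicab partition of $\range{X,Y}$, each image $\mathcal{B}_j:=\Phi(\mathcal{A}_j)$ is taxicab connected, and the $\mathcal{B}_j$ cover $\range{X,Z}$ by surjectivity of $\Phi$. Since a taxicab-connected set lies entirely inside one cell of the partition of $\range{X,Z}$, and the $\mathcal{B}_j$ cover $\range{X,Z}$, every cell of $\mathfrak{F}(X,Z)$ contains at least one $\mathcal{B}_j$; this forces $|\mathfrak{F}(X,Z)|\le m=|\mathfrak{F}(X,Y)|$, giving $I_\star(X;Z)\le I_\star(X;Y)$.

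I expect the main obstacle to be the $i=2$ case, namely turning the intuition ``post-processing only merges values of $Y$'' into a rigorous cap on the number of taxicab-connected components. The delicate points are verifying that $\Phi$ genuinely maps taxicab paths to taxicab paths (which hinges precisely on $g(y_i)=g(y_{i-1})$ whenever $y_i=y_{i-1}$) and arguing that each connected image $\mathcal{B}_j$ cannot straddle two distinct, taxicab-isolated cells, so that a cover by $m$ connected sets caps the number of cells at $m$. The $i=1$ case, by contrast, reduces to a one-line monotonicity argument once the union identity for conditional ranges is in place.
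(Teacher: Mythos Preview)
Your proposal is correct. For $i=1$ you reproduce the paper's argument essentially verbatim: establish the union identity $\range{X|z}=\bigcup_{y\in\range{Y|z}}\range{X|y}$, invoke monotonicity of $\mu$, and take the infimum over $z$ to conclude $d_0(X|Z)\ge d_0(X|Y)$.

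For $i=2$ you take a genuinely different route. The paper simply invokes the data-processing inequality for maximin information from~\cite{nair2013nonstochastic}, i.e., $I_\star(X;Z)\le I_\star(X;Y)$ whenever $X\to Y\to Z$ forms a Markov-like chain of uncertain variables. You instead supply a self-contained combinatorial proof: push the taxicab partition of $\range{X,Y}$ forward along $\Phi(x,y)=(x,g(y))$, check that $\Phi$ sends taxicab paths to taxicab paths, and bound the number of cells of $\mathfrak{F}(X,Z)$ by the number of connected images. This is exactly how the cited inequality is established in Nair's framework, so your argument is not merely an alternative but essentially unpacks the black box the paper relies on. The paper's version is shorter and keeps the proof at the level of the privacy application; your version is more elementary and makes the result independent of external references, at the cost of a paragraph of bookkeeping about connected covers versus partitions.
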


\begin{proof} Let uncertain variable $Y$ and $Z$ be defined as $Y(\omega):=f(X(\omega))$ and $Z(\omega):=g(Y(\omega))$ for all $\omega\in\Omega$.  The data processing inequality in~\cite{nair2013nonstochastic} implies that $I_\star(X;Z)\leq I_\star(X;Y)$. Therefore, $\revise{\mathcal{P}}_2$ can  be only increased by post processing. For the other measure of privacy note that
\begin{align*}
d_0(Z|X)=&\essinf_{z\in\range{Z}}\mu(\range{X|z})\\
=&\essinf_{z\in\range{Z}}\mu\bigg(\bigcup_{y'\in \range{Y|z}} \range{X|y'}\bigg)\\
\geq &\essinf_{z\in\revise{\range{Z}}}\essinf_{y\in\range{Y|z}}\mu(\range{X|y}) \\
=&\essinf_{y\in\range{Y}}\mu(\range{X|y}) \\
\geq &d_0(X|Y).
\end{align*}
Hence, $\revise{\mathcal{P}}_1$ can also  be only increased by post processing. This concludes the proof.
\end{proof}

The best policy for preserving privacy, maximizing the measure of privacy, is to ensure that $X$ and $f(X)$ are unrelated (making $\revise{\mathcal{P}}_i(f)=0$). This is, of course, without any value as all the information regarding $X$ would be lost and the utility of the report (in every possible sense) is zero. Therefore, there is a need for balancing utility and privacy. 

\begin{definition}[Measure of Quality] The measure of quality for the reporting function $\revise{\tilde{f}}$ for the query $f$ is 
\begin{align}\label{eqn:define:Q}
\mathfrak{Q}(\revise{\tilde{f}}):=\frac{1}{\displaystyle \essup_{x\in\range{X}}\linebreak\|f(x)-\revise{\tilde{f}}(x)\|_2}.
\end{align}
\end{definition}

With these definition ready, the optimal privacy-preserving policy $\tilde{f}$ can be computed by solving the optimization problem in
\begin{subequations} \label{eqn:Problem1}
\begin{align}
\mathbf{P}_\gamma: \max_{\revise{\tilde{f}}\in\mathcal{F}} \quad & \revise{\mathcal{P}}_i(\revise{\tilde{f}}),\\
\mathrm{s.t.}\quad & \mathfrak{Q}(\revise{\tilde{f}})\geq \gamma,
\end{align}
\end{subequations}
where $\mathcal{F}$ denotes the set of functions over which the privacy measure is optimized, i.e., the set of functions of interest for implementing as potential privacy-preserving policies. 

\begin{proposition}  \label{prop:relaxed} Assume that $\range{Y}\subseteq\mathbb{R}$. For any $\revise{\tilde{f}}=g\circ f$, 
\begin{subequations}
\begin{align}
I_\star(X;g(f(X)))&\leq I_\star(f(X);g(f(X))),\label{eqn:dataprocessing2}\\
\mathfrak{Q}(\revise{\tilde{f}})&=1/\essup_{y\in\range{Y}}\|y-g(y)\|_2.\label{eqn:dataprocessing3}
\end{align}
\end{subequations}
\end{proposition}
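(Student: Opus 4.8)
The plan is to treat the two claims separately, since \eqref{eqn:dataprocessing2} is an information-theoretic data-processing statement while \eqref{eqn:dataprocessing3} is essentially a change of variables. Throughout I would write $Y(\omega):=f(X(\omega))$ and $Z(\omega):=g(Y(\omega))=\tilde{f}(X(\omega))$, so that $Z$ is obtained from $X$ by the composition $g\circ f$.

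For \eqref{eqn:dataprocessing2}, I would first check that $X$ and $Z$ are conditionally unrelated given $Y$, i.e., that $X$--$Y$--$Z$ forms a Markov uncertainty chain. This is immediate: since $Z=g(Y)$ is a deterministic function of $Y$, the conditional range $\range{Z|y}=\{g(y)\}$ is a singleton for every $y\in\range{Y}$, whence $\range{X,Z|y}=\range{X|y}\times\{g(y)\}=\range{X|y}\times\range{Z|y}$. The conditional-unrelatedness condition is symmetric in its two endpoints, so $Z$--$Y$--$X$ is a Markov uncertainty chain as well. I would then invoke the data-processing inequality for maximin information from \cite{nair2013nonstochastic} (the same tool used in the post-processing theorem above) applied to the chain $Z$--$Y$--$X$, which yields $I_\star(Z;X)\leq I_\star(Z;Y)$. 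Finally, using the symmetry $I_\star(A;B)=I_\star(B;A)$ established for maximin information, this rearranges to $I_\star(X;Z)\leq I_\star(Y;Z)$, which is exactly \eqref{eqn:dataprocessing2}.

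For \eqref{eqn:dataprocessing3}, I would substitute $\tilde{f}=g\circ f$ directly into the definition~\eqref{eqn:define:Q} of $\mathfrak{Q}$. Since $\|f(x)-\tilde{f}(x)\|_2=\|f(x)-g(f(x))\|_2$ depends on $x$ only through the value $y=f(x)$, and since $\range{Y}=\{f(x):x\in\range{X}\}$ is precisely the image of $\range{X}$ under $f$, the change of variables $y=f(x)$ turns the supremum over $x\in\range{X}$ into the supremum over $y\in\range{Y}$ of $\|y-g(y)\|_2$. Taking reciprocals gives \eqref{eqn:dataprocessing3}.

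I expect the genuine difficulty to lie in two places. In \eqref{eqn:dataprocessing2}, the subtlety is orienting the data-processing inequality correctly: the form used previously bounds $I_\star(X;Z)$ by $I_\star(X;Y)$, so obtaining the bound by $I_\star(Y;Z)$ instead requires reversing the chain and leaning on symmetry, which is legitimate only because the Markov condition is symmetric in its endpoints. In \eqref{eqn:dataprocessing3}, the care is needed in passing from the supremum over $x$ to the supremum over $y$ at the level of \emph{essential} suprema rather than plain suprema: one must ensure that no $\mu$-null set of output values carries a positive-measure set of inputs (and conversely), which is where the scalar assumption $\range{Y}\subseteq\mathbb{R}$ and the non-degeneracy of the query $f$ enter. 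For plain suprema the identity is immediate from the image characterization $\range{Y}=f(\range{X})$, so the whole content of this step is the measure-theoretic bookkeeping of the essential supremum.
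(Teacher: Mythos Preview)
Your proposal is correct and follows essentially the same route as the paper: for \eqref{eqn:dataprocessing2} the paper simply invokes the data-processing inequality from \cite{nair2013nonstochastic} in the two-sided form $I_\star(X;Z)\leq\min(I_\star(X;Y),I_\star(Z;Y))$, which packages your Markov-chain verification, chain reversal, and symmetry step into a single citation; for \eqref{eqn:dataprocessing3} the paper says only ``follows from the definition,'' which is exactly your change of variables $y=f(x)$. Your discussion of the essential-supremum bookkeeping is more scrupulous than the paper itself, which treats the identity $\essup_{x\in\range{X}}|f(x)-g(f(x))|=\essup_{y\in\range{Y}}|y-g(y)|$ as immediate.
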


\begin{proof} 
Let uncertain variable $Y$ and $Z$ be defined as $Y(\omega):=f(X(\omega))$ and $Z(\omega):=g(Y(\omega))$ for all $\omega\in\Omega$. The data processing inequality~\cite{nair2013nonstochastic} shows that $I_\star(X;Z)\leq \min(I_\star(X;Y),I_\star(Z;Y))$. This concludes the proof for~\eqref{eqn:dataprocessing2}. The proof for~\eqref{eqn:dataprocessing3} follows from the definition. 
\end{proof} 

\revise{Proposition~\ref{prop:relaxed} states that, when restricting the search for privacy-preserving policies over the set of policies $\mathcal{F}:=\{\revise{\tilde{f}}|\exists g\in\mathcal{G}: \revise{\tilde{f}}=g\circ f\}$ for some set $\mathcal{G}$, the privacy metric can be relaxed to $I_\star(f(X);g(f(X)))$. Thus, }
the optimization problem in~\eqref{eqn:Problem1} with privacy measure~\eqref{eqn:define:P2} can be \textit{relaxed} to:
\begin{subequations} \label{eqn:Problem1_relaxed}
\begin{align}
\mathbf{P}'_\gamma: \min_{g\in\mathcal{G}} \quad & I_\star(Y;g(Y)),\\
\mathrm{s.t.}\hspace{.14in} & \essup_{y\in\range{Y}}\|y-g(y)\|_2\leq 1/\gamma,
\end{align}
\end{subequations}
\revise{In the relaxed problem, $f$ does not directly play a role in the privacy metric and, therefore, the optimal privacy-preserving policy becomes independent of $f$.} Note that such a relaxation is not possible for the privacy measure~\eqref{eqn:define:P1} because this measure of information is not symmetric and thus the data processing inequality does not hold for it in both directions. 



\section{Privacy-Preserving Policies} \label{sec:privacypolicy}
Before stating the results of the paper, the set of piecewise constant functions should be defined. Over the real line $\mathbb{R}$, a mapping $g:[\underline{y},\overline{y}]\rightarrow[\underline{y},\overline{y}]$ is a piecewise constant function if there exist $\underline{y}=a_1\leq a_2\leq \cdots\leq a_{q+1}=\overline{y}$ and $b_1\leq b_2\leq \cdots\leq b_{q}$ for some arbitrary number $q\in\mathbb{N}$ such that $g(y)=b_i$ for all $y\in [a_i,a_{i+1})$ except for $i=q$ in which case $g(y)=b_q$ for all $y\in [a_q,a_{q+1}]$. The ordered sets $(a_i)_{i=1}^{q+1}$ and $(b_i)_{i=1}^{q}$  are referred to as the parameters of the piecewise constant function. Let $\mathcal{Q}([\underline{y},\overline{y}])$ denote the set of all piecewise constant functions. For more general domains $\mathcal{X}$, a function $g:\mathcal{X}\rightarrow\mathbb{R}$ is a piecewise constant function if there exist sets $\{\mathcal{X}_i\}_{i=1}^q$ such that $\mathcal{X}\subseteq\bigcup_{i=1}^q \mathcal{X}_i$, $\mathcal{X}_i\cap \mathcal{X}_j=\emptyset$ if $i\neq j$, and $g(x)=b_i$ if $b_i\in\mathcal{X}_i$. The  ordered sets $(\mathcal{X}_i)_{i=1}^{q}$ and $(b_i)_{i=1}^{q}$  are referred to as the parameters of the piecewise constant function. Let $\mathcal{Q}(\mathcal{X})$ denote the set of all piecewise constant functions. When $\mathcal{X}$ is obvious from the context, $\mathcal{Q}$ is used instead of $\mathcal{Q}(\mathcal{X})$. The set of  piecewise constant functions is dense in $L^p$ for all $p\in[1,+\infty)$~\cite{levy2012elements}. In the next theorem, it is shown that searching over the set of piecewise constant functions is enough for finding the solution of~\eqref{eqn:Problem1}.

\begin{theorem}[Solution Class] The solution of~\eqref{eqn:Problem1} for the privacy metric in~\eqref{eqn:define:P2} over the set of piecewise differentiable functions is a piecewise constant function.
\end{theorem}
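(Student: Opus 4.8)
The plan is to reduce the combinatorial quantity $I_\star(X;\tilde{f}(X))$ to the logarithm of the number of distinct values taken by the reporting function, and then to argue that minimising this count over the feasible set forces the optimiser to be piecewise constant. First I would compute the taxicab partition of the joint range. Since $Y=\tilde{f}(X)$ is deterministic, the joint range $\range{X,Y}=\{(x,\tilde{f}(x)):x\in\range{X}\}$ is exactly the graph of $\tilde{f}$. On a graph each abscissa $x$ carries a single ordinate, so a ``same-$x$'' (vertical) taxicab move is trivial and only ``same-$y$'' (horizontal) moves can link distinct points; consequently every taxicab-connected component is contained in a single level set $\{(x,c):\tilde{f}(x)=c\}$, each such level set is taxicab connected through a single horizontal move, and distinct level sets are taxicab isolated. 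Hence the unique taxicab partition $\mathfrak{F}(X,Y)$ is precisely the collection of level sets of $\tilde{f}$, so that
\begin{align*}
I_\star(X;\tilde{f}(X))=\log\bigl(|\mathfrak{F}(X,Y)|\bigr)=\log\bigl(|\tilde{f}(\range{X})|\bigr),
\end{align*}
the logarithm of the number $N$ of distinct output values.

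With this identity, maximising $\mathcal{P}_2(\tilde{f})=1/I_\star(X;\tilde{f}(X))$ subject to $\mathfrak{Q}(\tilde{f})\geq\gamma$ becomes equivalent to minimising $N$ subject to $\essup_{x\in\range{X}}\|f(x)-\tilde{f}(x)\|_2\leq 1/\gamma$. Next I would observe that, whenever $\range{f(X)}$ is bounded, a uniform quantiser with finitely many cells of diameter at most $2/\gamma$ is feasible and takes finitely many values; therefore the optimal count is a finite positive integer $N^\star$, and since $N$ ranges over the integers this minimum is attained. Any reporting function taking infinitely many values has $I_\star=+\infty$, i.e.\ $\mathcal{P}_2=0$, and is thus dominated, so every optimiser takes exactly $N^\star<\infty$ values.

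It then remains to convert ``finitely many values'' into ``piecewise constant'', and this is where the restriction to piecewise differentiable functions does the work. On each connected piece $\tilde{f}$ is differentiable, hence continuous, and the continuous image of a connected set is connected; a connected subset of the finite set of attained values must be a singleton, forcing $\tilde{f}$ to be constant on each piece. Merging the (at most $N^\star$) level sets and their preimages into the sets $\{\mathcal{X}_i\}$ exhibits every optimiser as an element of $\mathcal{Q}(\range{X})$, which is the claim.

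I expect the main obstacle to be the first step, namely the clean identification of the taxicab partition of the graph of $\tilde{f}$ with its level sets, since this is precisely where the combinatorial definition of maximin information is transformed into the scalar $\log N$ that is trivial to minimise; once this is in place, the existence of a finite-valued minimiser and the continuity-plus-connectedness argument are routine. A secondary point to handle carefully is the boundedness of $\range{f(X)}$, which is needed to guarantee a finite-valued feasible solution, together with the verification that the merged level sets satisfy the finiteness and disjointness requirements in the paper's definition of $\mathcal{Q}(\range{X})$.
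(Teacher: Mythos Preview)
Your argument is correct and rests on the same core observation as the paper---namely that on the graph of a deterministic map the ``same-$x$'' taxicab moves are vacuous, so the taxicab partition of $\range{X,\tilde f(X)}$ coincides with the level sets of $\tilde f$---but you deploy it differently. The paper argues by dichotomy on the gradient: at any point with $\nabla\tilde f(x)\neq 0$ it finds a direction along which $\tilde f$ is locally strictly monotone, producing uncountably many distinct ordinates, and then invokes the level-set observation \emph{locally} (inside a contradiction) to conclude $|\mathfrak F(X,Y)|=\infty$; the only remaining piecewise differentiable functions are those with vanishing gradient on every piece, i.e.\ piecewise constants. You instead state the level-set identity \emph{globally} up front, obtaining $I_\star(X;\tilde f(X))=\log|\tilde f(\range X)|$, and then finish with a purely topological step (continuous image of a connected piece is connected, hence a singleton inside a finite range). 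Your route is more modular---the characterisation of $I_\star$ is separated from the use of differentiability, which enters only through continuity at the very end---at the price of needing an explicit finite feasible competitor (your uniform quantiser, which requires bounded $\range{f(X)}$, as you note). The paper's route avoids constructing a competitor but is slightly informal when it asserts that ``the only functions with $\nabla\tilde f=0$ where defined are piecewise constant''; your connectedness argument makes that step precise.
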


\begin{proof} Let $x\in\range{X}$ be any point such that $\nabla \revise{\tilde{f}}(x)\neq 0$. Then there exists a direction $d$ such that $d^\top\nabla \revise{\tilde{f}}(x)\neq 0$. Assume that $d^\top\nabla \revise{\tilde{f}}(x)> 0$; the proof for the other case is identical and is thus omitted.  By piecewise continuity of the derivatives, it can be deduced that there exists a small enough neighbourhood around $x$ of the form $\|\tilde{x}-x\|\leq \epsilon\|d\|$ inside which $d^\top\nabla \revise{\tilde{f}}(\tilde{x})>0$. Therefore, for all $w\in(-\epsilon,\epsilon)$, $\revise{\tilde{f}}(x+wd)$ is increasing and takes a unique value for any $w\in(-\epsilon,\epsilon)$. It must be established that no two distinct points in $\{(x+wd,\revise{\tilde{f}}(x+wd))|w\in(-\epsilon,\epsilon)\}$ are taxicab connected. This is done by contrapositive. Assume that this not the case. Therefore, there exists $(x,y),(x',y')\in\{(x+wd,\revise{\tilde{f}}(x+wd))|w\in(-\epsilon,\epsilon)\}\subseteq\range{X,Y}$ that are taxicab connected. This implies that there exists a sequence of points $\{(x_i,y_i)\}_{i=1}^n\subseteq\range{X,Y}$ such that $(x,y)\neq (x',y')$, $(x_1,y_1)=(x,y)$, $(x_n,y_n)=(x',y')$, and either $x_i=x_{i-1}$ or $y_i=y_{i-1}$ for all $i\in\{2,\dots,n\}$. Because $f$ is a function (i.e., $y_i=\revise{\tilde{f}}(x_i)=\revise{\tilde{f}}(x_{i-1})=y_{i-1}$ if $x_i=x_{i-1}$), all transitions such that $x_i=x_{i-1}$ can be eliminated (as it would also implies that $y_i=y_{i-1}$). Therefore,  a subsequence of points $\{(\bar{x}_i,\bar{y}_i)\}_{i=1}^{\bar{n}}\subseteq\{(x_i,y_i)\}_{i=1}^n\subseteq\range{X,Y}$ can be constructed so that $(\bar{x}_1,\bar{y}_1)=(x,y)$, $(\bar{x}_{\bar{n}},\bar{y}_{\bar{n}})=(x',y')$, and $\bar{y}_i=\bar{y}_{i-1}$ for all $i\in\{2,\dots,\bar{n}\}$. This implies that $y'=\bar{y}_{\bar{n}}=\bar{y}_{\bar{n}-1}=\cdots=\bar{y}_2=\bar{y}_1=y.$ This is in contradiction with the assumption that $(x,y)\neq (x',y')$ because it must be that $y\neq y'$; note that if $x_1\neq x_2$ in $\{(x+wd,\revise{\tilde{f}}(x+wd))|w\in(-\epsilon,\epsilon)\}$, it must also hold that $y_1\neq y_2$. Noting that no two distinct points in $\{(x+wd,\revise{\tilde{f}}(x+wd))|w\in(-\epsilon,\epsilon)\}$ are taxicab connected, there needs to be, at least, as many taxicab partitions as in the number of points in $\{(x+wd,\revise{\tilde{f}}(x+wd))|w\in(-\epsilon,\epsilon)\}$. This implies that $|\mathfrak{F}(X,Y)|=\infty$. The other category of functions is all functions for which $\nabla \revise{\tilde{f}}(x)= 0$ (where defined) for all $x$. The only functions that satisfy this condition are  piecewise constant functions. For piecewise constants $|\mathfrak{F}(X,Y)|=q<\infty$ with $q$ denoting the number of disjoint sets $\{\mathcal{X}_i\}_{i=1}^q$. 
\end{proof}

This fundamental result restricts the set of optimal privacy-preserving policies greatly and thus reduces the complexity of finding one. 

\begin{definition}[Uniform Quantizer] A uniform quantizer is a scalar piecewise constant function with parameters $(a_i)_{i=1}^{q+1}$ and $(b_i)_{i=1}^{q}$ such that  $a_{i+1}-a_{i}=a_{j+1}-a_j$ and $b_j=(a_j+a_{j+1})/2$ for all $1\leq i,j\leq q$. A uniform quantizer can be equivalently represented by the range $[a_1,a_{q+1}]$ and the number of bins $q$. 
\end{definition}

As the first step, the relaxed problem in~\eqref{eqn:Problem1_relaxed} is solved for scalar cases in the next theorem. 

\begin{theorem}[Relaxed Policy] \label{tho:Problem1_relaxed} Assume that $\range{Y}=[\underline{y},\overline{y}]\subseteq\mathbb{R}$. The solution of~\eqref{eqn:Problem1_relaxed} over $\mathcal{F}=\mathcal{Q}\circ\{f\}$ is a uniform quantizer, equi-dividing $\range{Y}$ into $\lceil \gamma(\overline{y}-\underline{y})/2\rceil$ bins.
\end{theorem}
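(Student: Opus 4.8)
The plan is to reduce the functional optimization $\mathbf{P}'_\gamma$ to a one-dimensional counting problem. Since $\mathcal{F}=\mathcal{Q}\circ\{f\}$, the candidate $g$ is already a scalar piecewise constant function on $\range{Y}=[\underline{y},\overline{y}]$, say with parameters $(a_i)_{i=1}^{q+1}$ and $(b_i)_{i=1}^{q}$. The first and most important step is to evaluate the objective $I_\star(Y;g(Y))$ explicitly. The joint range $\range{Y,g(Y)}$ is exactly the graph of $g$, a disjoint union of horizontal segments $\{(y,b_i):y\in[a_i,a_{i+1})\}$ at heights $b_1\le\cdots\le b_q$. I would show that each such segment is taxicab connected (any two of its points share the second coordinate $b_i$), while segments at distinct heights are taxicab isolated: a horizontal move preserves the height $b_i$, and a vertical move from $(y,b_i)$ must remain on the graph and hence cannot leave the height $g(y)=b_i$. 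Consequently the unique taxicab partition has exactly as many cells as there are distinct output levels, so $I_\star(Y;g(Y))=\log q$ after merging adjacent bins with equal value (which only reduces $q$). Minimizing the objective is therefore equivalent to minimizing the number of bins $q$.

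Next I would convert the quality constraint into a bound on bin widths. On the bin $[a_i,a_{i+1}]$ the worst-case error is $\essup_{y\in[a_i,a_{i+1}]}|y-b_i|=\max(b_i-a_i,\,a_{i+1}-b_i)$, which is minimized over $b_i$ by the midpoint choice $b_i=(a_i+a_{i+1})/2$, yielding $(a_{i+1}-a_i)/2$. Hence the error-minimizing level placement is at the bin midpoints, matching a uniform quantizer, and under this placement the constraint $\essup_{y}|y-g(y)|\le 1/\gamma$ becomes simply $a_{i+1}-a_i\le 2/\gamma$ for every $i$. In particular, attaining the full allowable width $2/\gamma$ forces the midpoint placement.

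It then remains to minimize $q$ subject to covering $[\underline{y},\overline{y}]$ by $q$ intervals of width at most $2/\gamma$. Since the widths sum to $\overline{y}-\underline{y}$ and each is bounded by $2/\gamma$, we obtain $\overline{y}-\underline{y}\le 2q/\gamma$, i.e. $q\ge \gamma(\overline{y}-\underline{y})/2$; as $q\in\mathbb{N}$, the minimum is $q^\star=\lceil \gamma(\overline{y}-\underline{y})/2\rceil$. I would then verify that the uniform quantizer equi-dividing $[\underline{y},\overline{y}]$ into $q^\star$ bins is feasible, because each bin has width $(\overline{y}-\underline{y})/q^\star\le 2/\gamma$ by definition of the ceiling, and that it attains the objective $\log q^\star$, hence is optimal.

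The step I expect to require the most care is the first one: the clean identification $I_\star(Y;g(Y))=\log q$ through the taxicab partition. The delicate points are arguing rigorously that distinct-height horizontal segments are taxicab isolated (which hinges on $g$ being single-valued, so that no vertical move can escape a segment) and handling the degenerate cases where some $b_i$ coincide or where $\gamma(\overline{y}-\underline{y})/2$ is already an integer, in which case the constraint is tight and the uniform division is forced. The remaining covering and counting argument is routine.
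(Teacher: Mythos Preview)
Your proposal is correct and follows essentially the same route as the paper: identify the objective with the number of quantization levels, optimize the $b_i$ to midpoints so the constraint becomes $a_{i+1}-a_i\le 2/\gamma$, and then use the summation bound $\sum_i(a_{i+1}-a_i)=\overline{y}-\underline{y}\le 2q/\gamma$ to obtain $q^\star=\lceil\gamma(\overline{y}-\underline{y})/2\rceil$, achieved by the uniform quantizer. The paper simply asserts $I_\star(Y;g(Y))=q$ (really $\log q$) without justification, whereas you spell out the taxicab-partition argument and handle the degeneracy of coincident $b_i$; in that sense your write-up is slightly more careful, but the underlying idea is identical.
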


\begin{proof}
Note that, for any $\revise{\tilde{f}}\in\mathcal{F}$,
\begin{align*}
\frac{1}{\mathfrak{Q}(\revise{\tilde{f}})}
&=\essup_{x\in\range{X}} |f(x)-g(f(x))|\\
&=
\essup_{y\in\range{Y}} |y-g(y)|\\
&=\max_{1\leq i\leq q}\max(|b_i-a_i|,|b_i-a_{i+1}|),
\end{align*}
where $g$ is any function in $\mathcal{Q}$. Furthermore, $I_\star(Y;g(Y))=q$. 
The problem~\eqref{eqn:Problem1_relaxed} can be rewritten as
\begin{align*}
\min_{(a_i)_{i=1}^{q+1},(b_i)_{i=1}^{q}} & q,\\
\mathrm{s.t.}\hspace{.27in} & \max_{1\leq i\leq q}\max(|b_i-a_i|,|b_i-a_{i+1}|)\leq \frac{1}{\gamma},\\
& a_{q+1}=\overline{y},\quad a_{1}=\underline{y}.
\end{align*}
By selecting $b_i=(a_i+a_{i+1})/2$, $\max(|b_i-a_i|,|b_i-a_{i+1}|)$ can be made as small as possible. Thus, this problem can be rewritten as 
\begin{subequations} \label{eqn:intermediateproblem}
\begin{align}
\min_{(a_i)_{i=1}^{q+1},(b_i)_{i=1}^{q}} & q,\\
\mathrm{s.t.}\hspace{.27in} & \max_{1\leq i\leq q} \frac{1}{2}|a_{i+1}-a_{i}|\leq \frac{1}{\gamma},\\
&\sum_{i=1}^q |a_{i+1}-a_{i}|=\overline{y}-\underline{y}.
\end{align}
\end{subequations}
It is easy to show that $q< \gamma(\overline{y}-\underline{y})/2$, the problem is not feasible. This is because
\begin{align*}
\sum_{i=1}^q |a_{i+1}-a_{i}|
&\leq q\max_{1\leq i\leq q}|a_{i+1}-a_{i}|\\
&\leq q2/\gamma\\
&<\overline{y}-\underline{y}.
\end{align*}
Therefore, a lower bound on the solution of~\eqref{eqn:intermediateproblem} is then the smallest integer that is larger than $\gamma(\overline{y}-\underline{y})/2$, i.e., $\lceil\gamma(\overline{y}-\underline{y})/2\rceil$. The uniform quantizer in the statement of theorem achieves the lower bound. 
%
\end{proof}

Now, the general problem in~\eqref{eqn:Problem1} can be considered for scalar queries over the set of piecewise continuous functions.

\begin{theorem}[Optimal Policy] \label{tho:Problem1_general} Assume that $\range{Y}\subseteq\mathbb{R}$. The solution of~\eqref{eqn:Problem1} for privacy measures in~\eqref{eqn:define:P1} over $\mathcal{F}=\mathcal{Q}(\range{X})$ is given by
\begin{subequations}
\begin{align}
b_i^*\in\argmin_{b_i}\max_{x\in\mathcal{X}^*_i} \;|f(&x)-b_i|, \\
\hspace{-.1in}\{\mathcal{X}_i^*\}_{i=1}^{q^*}\in \hspace{-.1in}\argmax_{\{\mathcal{X}_i\}_{i=1}^q:\range{X}\subseteq\bigcup_{i=1}^q \mathcal{X}_i} &  \; \min_{1\leq i\leq q}  \mu(\mathcal{X}_i),\\
\mathrm{s.t.}\hspace{.44in} &\;  \max_{1\leq i\leq q}\mathrm{rad}(f(\mathcal{X}_i))\leq \frac{1}{\gamma}.
\end{align}
\end{subequations}
For privacy measures in~\eqref{eqn:define:P2} over $\mathcal{F}=\mathcal{Q}(\range{X})$ is given by
\begin{subequations}
\begin{align}
b_i^*\in\argmin_{b_i}\max_{x\in\mathcal{X}^*_i} \;|f(&x)-b_i|, \\
\hspace{-.1in}\{\mathcal{X}_i^*\}_{i=1}^{q^*}\in \hspace{-.1in}\argmax_{\{\mathcal{X}_i\}_{i=1}^q:\range{X}\subseteq\bigcup_{i=1}^q \mathcal{X}_i} &  \; q,\\
\mathrm{s.t.}\hspace{.44in} &\;  \max_{1\leq i\leq q}\mathrm{rad}(f(\mathcal{X}_i))\leq \frac{1}{\gamma}.
\end{align}
\end{subequations}
\end{theorem}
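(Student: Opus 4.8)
The plan is to parametrize an arbitrary $\tilde{f}\in\mathcal{Q}(\range{X})$ by its pieces $\{\mathcal{X}_i\}_{i=1}^q$ and levels $(b_i)_{i=1}^q$, with $\tilde{f}(x)=b_i$ for $x\in\mathcal{X}_i$, and to decouple the choice of levels from the choice of partition. First I would rewrite the quality functional using the fact that $Y=\tilde{f}(X)$ is constant on each piece,
\begin{align*}
\frac{1}{\mathfrak{Q}(\tilde{f})}=\essup_{x\in\range{X}}\|f(x)-\tilde{f}(x)\|_2=\max_{1\leq i\leq q}\;\sup_{x\in\mathcal{X}_i}|f(x)-b_i|.
\end{align*}
For a fixed partition this is minimized, piece by piece, by the Chebyshev centre of the image set $f(\mathcal{X}_i)$, whose optimal value is exactly $\mathrm{rad}(f(\mathcal{X}_i))$. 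Hence the optimal levels are $b_i^*\in\argmin_{b_i}\max_{x\in\mathcal{X}_i}|f(x)-b_i|$, and with this choice the constraint $\mathfrak{Q}(\tilde{f})\geq\gamma$ turns into $\max_{1\leq i\leq q}\mathrm{rad}(f(\mathcal{X}_i))\leq 1/\gamma$. This is the common skeleton of both displayed programs: the levels no longer appear in the remaining problem, which depends on the partition $\{\mathcal{X}_i\}_{i=1}^q$ alone.

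Next I would evaluate each privacy measure on this parametrization, using that distinct levels give $\range{X|b_i}=\mathcal{X}_i$ and hence $\mu(\range{X|b_i})=\mu(\mathcal{X}_i)$. For $\mathcal{P}_1$, since $h_0(X)=\log(\mu(\range{X}))$ is constant and $d_0(X|Y)=\inf_{y\in\range{Y}}\log(\mu(\range{X|y}))=\log(\min_{i}\mu(\mathcal{X}_i))$, one obtains $L_0(X;Y)=\log(\mu(\range{X})/\min_{i}\mu(\mathcal{X}_i))$; maximizing $\mathcal{P}_1=1/L_0$ is then equivalent to maximizing $\min_{1\leq i\leq q}\mu(\mathcal{X}_i)$, which, together with the radius constraint, is the first displayed program. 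For $\mathcal{P}_2$ I would compute the maximin information directly from its taxicab partition: on the graph $\range{X,Y}=\{(x,\tilde{f}(x)):x\in\range{X}\}$ a vertical move is impossible because a function assigns a single output to each input, so the taxicab-connected components are exactly the $q$ level sets $\mathcal{X}_i\times\{b_i\}$, giving $|\mathfrak{F}(X,Y)|=q$ and $I_\star(X;Y)=\log q$. The maximin privacy is therefore a function of the partition through $q=|\mathfrak{F}(X,Y)|$ only, and coupling this with $\max_{i}\mathrm{rad}(f(\mathcal{X}_i))\leq 1/\gamma$ reproduces the second displayed program over $\{\mathcal{X}_i\}_{i=1}^q$.

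The step I expect to be the main obstacle is justifying the clean decoupling between levels and partition, i.e. that replacing $(b_i)$ by the Chebyshev centres does not change the privacy value. This is valid only while the chosen levels remain pairwise distinct, since coincident levels fuse two pieces and merge the corresponding conditional ranges. I would handle this by noting that fusing equal-level pieces only enlarges a conditional range and relaxes the radius constraint, so one may restrict attention to partitions whose Chebyshev centres are distinct and re-index accordingly. A secondary technical point is to invoke the solution-class result so as to confine the search to $\mathcal{Q}(\range{X})$ legitimately, and to establish $I_\star(X;Y)=\log q$ rigorously from the uniqueness of the taxicab partition, so that for $\mathcal{P}_2$ the objective is genuinely determined by $q$ and the characterization collapses to the displayed program.
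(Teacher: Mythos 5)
Your proposal follows essentially the same route as the paper's own proof: the same parametrization of $\tilde{f}\in\mathcal{Q}(\range{X})$ by pieces $\{\mathcal{X}_i\}$ and levels $(b_i)$, the same reduction of the quality constraint to $\max_{1\leq i\leq q}\mathrm{rad}(f(\mathcal{X}_i))\leq 1/\gamma$ via Chebyshev centres (using $\mathrm{rad}(f(\mathcal{X}_i))=\min_{b_i}\sup_{y\in f(\mathcal{X}_i)}|y-b_i|$), the same evaluation $d_0(X|\tilde{f}(X))=\min_{1\leq i\leq q}\log(\mu(\mathcal{X}_i))$ for the measure~\eqref{eqn:define:P1}, and the same identification of $(\mathcal{X}_i\times\{b_i\})_{i=1}^q$ as the taxicab partition of $\range{X,\tilde{f}(X)}$ for the measure~\eqref{eqn:define:P2}. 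Your two refinements---writing $I_\star(X;Y)=\log q$ rather than $q$, and explicitly handling coincident levels by fusing equal-level pieces (which preserves feasibility and can only help either objective)---are minor points of rigor the paper's proof passes over silently, not a different argument.
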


\begin{proof} 
Note that, for any $\revise{\tilde{f}}\in\mathcal{F}=\mathcal{Q}(\range{X})$, there exists $\{\mathcal{X}_i,b_i\}_{i=1}^q$ such that $\range{X}\subseteq\bigcup_{i=1}^q \mathcal{X}_i$, $\mathcal{X}_i\cap \mathcal{X}_j=\emptyset$ if $i\neq j$, and $\revise{\tilde{f}}(x)=b_i$ if $b_i\in\mathcal{X}_i$. Hence,
\begin{align*}
\frac{1}{\mathfrak{Q}(f)}
&=\essup_{x\in\range{X}} |f(x)-\revise{\tilde{f}}(x)|\\
&=\max_{1\leq i\leq q}\sup_{x\in\mathcal{X}_i} |f(x)-b_i|.
\end{align*}
Let us consider the privacy measure in~\eqref{eqn:define:P1}. It can be shown that
\begin{align*}
d_0(X|\revise{\tilde{f}}(X))
&=\essinf_{x\in\range{X}} \log(\mu(\range{X|\revise{\tilde{f}}(x)}))\\
&=\essinf_{1\leq i\leq q} \log(\mu(\range{X|\revise{\tilde{f}}(x)=b_i}))\\
&=\min_{1\leq i\leq q}  \log(\mu(\mathcal{X}_i)).
\end{align*}
The problem~\eqref{eqn:Problem1} can be rewritten as
\begin{align*}
\max_{\{\mathcal{X}_i,b_i\}_{i=1}^q} & \min_{1\leq i\leq q}  \log(\mu(\mathcal{X}_i)),\\
\mathrm{s.t.}\hspace{.15in} & \max_{1\leq i\leq q}\sup_{x\in\mathcal{X}_i} |f(x)-b_i|\leq \frac{1}{\gamma},\\
& \range{X}\subseteq\bigcup_{i=1}^q \mathcal{X}_i.
\end{align*}
This problem can be rewritten again as
\begin{align*}
\max_{\{\mathcal{X}_i\}_{i=1}^q:\range{X}\subseteq\bigcup_{i=1}^q \mathcal{X}_i} &  \quad \min_{1\leq i\leq q}  \mu(\mathcal{X}_i),\\
\mathrm{s.t.}\hspace{.44in} &\quad  \max_{1\leq i\leq q}\min_{b_i}\sup_{x\in\mathcal{X}_i} |f(x)-b_i|\leq \frac{1}{\gamma}.
\end{align*}
Noting that $\mathrm{rad}(f(\mathcal{X}_i))=\min_{b_i}\sup_{y\in f(\mathcal{X}_i)} |y-b_i|$ concludes the proof for the first part. 
Now, let us consider the privacy measure in~\eqref{eqn:define:P2}. It can be seen that $I_\star(X;\revise{\tilde{f}}(X))=q$. This is because $(\mathcal{X}_i\times\{b_i\})_{i=1}^q$ forms a taxicab partition for $\range{X,f(X)}$. 
Hence, the problem~\eqref{eqn:Problem1} can be rewritten as
\begin{align} \label{eqn:intermediateproblem:2}
\min_{\{\mathcal{X}_i\}_{i=1}^q:\range{X}\subseteq\bigcup_{i=1}^q \mathcal{X}_i} &  \quad q,\\
\mathrm{s.t.}\hspace{.44in} &\quad  \max_{1\leq i\leq q}\min_{b_i}\sup_{x\in\mathcal{X}_i} |f(x)-b_i|\leq \frac{1}{\gamma}.
\end{align}
This concludes the proof.
\end{proof}

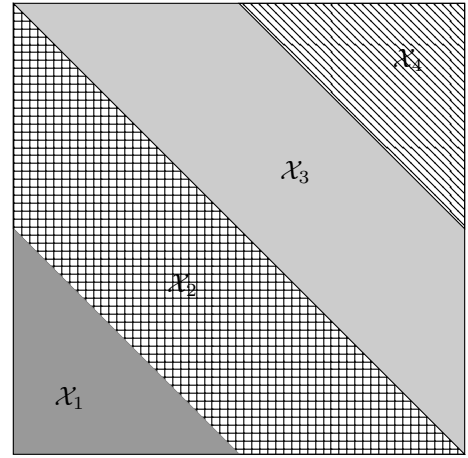
\begin{figure}
\centering
\begin{tikzpicture}[scale=1.5]
\draw[-,fill=black!20,color=black!20] (-2,2) -- (2,-2) -- (2,0) -- (0,2) -- cycle;
\draw[pattern=north west lines,pattern color=black] (2,0) -- (0,2) -- (2,2) -- cycle;
\draw[pattern=grid,pattern color=black] (-2,0) -- (-2,2) -- (2,-2) -- (0,-2) -- cycle;
\draw[-,fill=black!40,color=black!40] (-2,0) -- (0,-2) -- (-2,-2) -- cycle;
\draw[-] (-2,-2) -- (2,-2) -- (2,2) -- (-2,2) -- cycle;
\node[] at (-1.5,-1.5) {$\mathcal{X}_1$};
\node[] at (-0.5,-0.5) {$\mathcal{X}_2$};
\node[] at (0.5,0.5) {$\mathcal{X}_3$};
\node[] at (1.5,1.5) {$\mathcal{X}_4$};
\end{tikzpicture}
\caption{\label{fig:1} The regions $\{\mathcal{X}_i\}_{i=1}^4$ for the optimal privacy-preserving policy in Theorem~\ref{tho:Problem1_general} for $\range{X}=[-2,2]^2$, $\gamma=2$, and linear query $f(x)=\mathds{1}^\top x/2$. For the optimal policy, $b_1=-1.5$, $b_2=-0.5$, $b_3=0.5$, and $b_4=1.5$. }
\end{figure}

For the case where $\range{X}\subseteq\mathbb{R}$, the results of Theorems~\ref{tho:Problem1_general} and~\ref{tho:Problem1_relaxed} \revise{are equal}~\cite{KONNO1988205}. Therefore, there is no loss of generality in designing the quantizer after computing $f(x)$ rather than designing a general $\revise{\tilde{f}}(x)$. In the next corollary, this property  is proved for general queries under mild assumptions. 

\begin{corollary} \label{cor:equivalence} Let $f$ be a function that $f^{-1}(y):=\{x|f(x)=y\}$ is a connected set for all $y\in\range{Y}$. Then, the optimal policy in Theorem~\ref{tho:Problem1_general} for the privacy metric~\eqref{eqn:define:P2} is equal to the the optimal policy in Theorem~\ref{tho:Problem1_relaxed}.
\end{corollary}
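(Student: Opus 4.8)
The plan is to show that the two optimization problems characterizing the policies in Theorems~\ref{tho:Problem1_general} and~\ref{tho:Problem1_relaxed} share the same optimal value and that, under the connectedness hypothesis, the minimizing partition of $\range{X}$ produced by Theorem~\ref{tho:Problem1_general} for~\eqref{eqn:define:P2} can be taken to be the pull-back through $f$ of the interval partition of $\range{Y}$ generated by the uniform quantizer of Theorem~\ref{tho:Problem1_relaxed}. Recall that, for the metric~\eqref{eqn:define:P2}, the general problem~\eqref{eqn:intermediateproblem:2} minimizes the number of pieces $q$ of a partition $\{\mathcal{X}_i\}$ of $\range{X}$ subject to $\max_i \mathrm{rad}(f(\mathcal{X}_i))\leq 1/\gamma$, whereas Theorem~\ref{tho:Problem1_relaxed} minimizes the number of bins of a quantizer $g$ on $\range{Y}=[\underline{y},\overline{y}]$ under the same radius bound.

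First I would establish equality of the optimal values. For the lower bound, for any feasible partition $\{\mathcal{X}_i\}_{i=1}^q$ the images $\{f(\mathcal{X}_i)\}_{i=1}^q$ cover $\range{Y}$, since $\bigcup_i \mathcal{X}_i\supseteq\range{X}$ forces $\bigcup_i f(\mathcal{X}_i)\supseteq f(\range{X})=\range{Y}$, and each image has diameter at most $2/\gamma$; covering an interval of length $\overline{y}-\underline{y}$ by $q$ sets of diameter at most $2/\gamma$ requires $q\geq\lceil\gamma(\overline{y}-\underline{y})/2\rceil$, which is exactly the bin count of Theorem~\ref{tho:Problem1_relaxed}. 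For the matching upper bound, the uniform quantizer $g^\ast$ of Theorem~\ref{tho:Problem1_relaxed} induces the partition $\mathcal{X}_i:=f^{-1}(I_i)$, where $\{I_i\}$ are its bins; this is feasible for~\eqref{eqn:intermediateproblem:2} because $f(\mathcal{X}_i)\subseteq I_i$, and it attains the same $q$. Hence the two optimal values coincide and the relaxed policy $g^\ast\circ f$ is already feasible and optimal for the general problem.

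The remaining, and most delicate, step is to promote this value equality to an equality of \emph{policies}, i.e.\ to argue that every minimizing $\range{X}$-partition is, up to a null set, the pull-back of an interval partition of $\range{Y}$, so that $\revise{\tilde{f}}=g^\ast\circ f$. The radius constraint already forces every piece to satisfy $\mathcal{X}_i\subseteq f^{-1}(J_i)$ with $J_i:=\mathrm{conv}(f(\mathcal{X}_i))$ an interval of length at most $2/\gamma$, so at optimality the hulls $J_i$ may be taken to tile $\range{Y}$ as consecutive intervals $[\alpha_i,\alpha_{i+1})$, precisely the $\range{Y}$-partition solved in Theorem~\ref{tho:Problem1_relaxed}. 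Connectedness of $f^{-1}(y)$ is exactly what guarantees that no minimizer can gain by splitting a level set across two pieces: since each $f^{-1}(y)$ is an indivisible connected block under the partition, any piece can be replaced by the full preimage $f^{-1}([\alpha_i,\alpha_{i+1}))$ without changing $q$ or violating the radius bound, producing a policy that depends on $x$ only through $f(x)$. Finally, Proposition~\ref{prop:relaxed} supplies $I_\star(X;g^\ast(f(X)))\leq I_\star(f(X);g^\ast(f(X)))$, and the explicit taxicab-component count for the pulled-back quantizer shows this inequality is tight, so the Theorem~\ref{tho:Problem1_general} optimizer is identified with the Theorem~\ref{tho:Problem1_relaxed} optimizer.

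I expect the main obstacle to be this ``indivisible block'' reduction: rigorously arguing that splitting a connected level set is never advantageous requires care, because the pieces $\mathcal{X}_i$ are arbitrary measurable sets and the map from $\range{X}$-partitions to $\range{Y}$-covers need not be injective without further structure. The clean way to overcome it is to verify that the interval-hull map $\{\mathcal{X}_i\}\mapsto\{J_i\}$ and the pull-back map $\{I_i\}\mapsto\{f^{-1}(I_i)\}$ are mutually inverse at optimality, with connectedness of the level sets being precisely the hypothesis that makes the two partition problems posed on $\range{X}$ and on $\range{Y}$ correspond bijectively through $f^{-1}$.
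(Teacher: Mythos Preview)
Your proposal is correct and, in its essential step (the second paragraph), cleaner than the paper's own argument. The paper proceeds differently: it takes an arbitrary feasible partition $\{\mathcal{X}_i\}_{i=1}^q$ of $\range{X}$, pushes it forward to $\mathcal{Y}'_i:=f(\mathcal{X}_i)$, disjointifies these to $\mathcal{Y}_i:=\mathcal{Y}'_i\setminus(\mathcal{Y}_1\cup\cdots\cup\mathcal{Y}_{i-1})$, and then asserts that each $\mathcal{Y}_i$ is an interval so that the problem collapses to the one-dimensional quantizer design of Theorem~\ref{tho:Problem1_relaxed}. Your route instead sandwiches the optimal value directly: the images $\{f(\mathcal{X}_i)\}$ cover $\range{Y}$ with diameter at most $2/\gamma$, forcing $q\geq\lceil\gamma(\overline{y}-\underline{y})/2\rceil$, while the pull-back $\mathcal{X}_i=f^{-1}(I_i)$ of the uniform bins attains this bound. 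This two-sided argument is more transparent and, notably, does not actually invoke connectedness of the level sets---which is consistent with the paper's own proof, where the connectedness hypothesis is only loosely tied to the claim that the disjointified $\mathcal{Y}_i$ are intervals.

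Your third and fourth paragraphs go beyond what the corollary strictly asserts: the statement only requires that the relaxed optimum $g^\ast\circ f$ is optimal for the general problem, not that it is the \emph{unique} optimizer. Your second paragraph already delivers this. The ``indivisible block'' discussion is a reasonable attempt to pin down where connectedness ought to matter, and you are right to flag it as the delicate step; but for the corollary as stated you may simply stop after establishing value equality and feasibility of the pull-back.
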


\begin{proof} The solution of~\eqref{eqn:Problem1} for the privacy measure in~\eqref{eqn:define:P2} is given by~\eqref{eqn:intermediateproblem:2}. Define $\mathcal{Y}'_i=\{y|\exists x\in\mathcal{X}_i:y=f(x)\}$. The inequality constraint in~\eqref{eqn:intermediateproblem:2} is equivalent to saying that that $ \max_{1\leq i\leq q}\min_{b_i}\sup_{y\in\mathcal{Y}'_i} |y-b_i|\leq 1/\gamma$. Let $\mathcal{Y}_i$ be defined such that $\mathcal{Y}_1=\mathcal{Y}'_1$ and $\mathcal{Y}_i=\mathcal{Y}'_i\setminus (\mathcal{Y}_1\cup\cdots\cup \mathcal{Y}_{i-1})$  for all $i>1$. Clearly, $\mathcal{Y}_i\subseteq\mathcal{Y}'_i$ and thus $ \max_{1\leq i\leq q}\min_{b_i}\sup_{y\in\mathcal{Y}_i} |y-b_i|\leq 1/\gamma$. If $\mathcal{Y}_i$ is connected, it should take one of the following forms $[a_i,a_{i+1}]$, $[a_i,a_{i+1})$, $(a_i,a_{i+1}]$, or $(a_i,a_{i+1})$. Therefore, by selecting $b_i=(a_i+a_{i+1})/2$ minimizes $\sup_{y\in\mathcal{Y}_i} |y-b_i|$. This implies that~\eqref{eqn:intermediateproblem:2} can be rewritten as the optimization problem in the statement of Theorem~\ref{tho:Problem1_relaxed}. 
\end{proof}

\begin{figure}
\centering
\begin{tikzpicture}[scale=1.5]
\clip (-2.01,-2.01) rectangle (2.01,2.01);
\draw[preaction={fill,white},pattern=vertical lines,pattern color=black] 
(0,0) ellipse (3.4641 and 2.4495);
\draw[-,fill=black!60] 
(0,0) ellipse (3.3166 and 2.3452);
\draw[preaction={fill,white},pattern=north east lines,pattern color=black] 
(0,0) ellipse (3.1623 and 2.2361);
\draw[-,fill=black!50]  
(0,0) ellipse (3.0000 and 2.1213);
\draw[preaction={fill,white},pattern=crosshatch,pattern color=black] 
(0,0) ellipse (2.8284 and 2.000);
\draw[-,fill=black!40] 
(0,0) ellipse (2.6458 and 1.8708);
\draw[preaction={fill,white},pattern=crosshatch dots,pattern color=black] 
(0,0) ellipse (2.4495 and 1.7321);
\draw[-,fill=black!30] 
(0,0) ellipse (2.2361 and 1.5811);
\draw[preaction={fill,white},pattern=grid,pattern color=black] 
(0,0) ellipse (2.0000 and 1.4142);
\draw[-,fill=black!20] 
(0,0) ellipse (1.7321 and 1.2247);
\draw[preaction={fill,white},pattern=north west lines,pattern color=black] 
(0,0) ellipse (1.4142 and 1.0000);
\draw[-,fill=black!10] 
(0,0) ellipse (1.0000 and 0.7071);
\draw[-] (-2,-2) -- (2,-2) -- (2,2) -- (-2,2) -- cycle;
\end{tikzpicture}
\begin{tikzpicture}[overlay,>=stealth,shift={(-3.1,+3)},scale=1.5]
\node[] at (0,0) {$\mathcal{X}_1$};
\node[] at (2.8,0) (x3) {$\mathcal{X}_3$};
\draw[->] (x3) -- (1.5,0);
\node[] at (2.7,0.7) (x4) {$\mathcal{X}_4$};
\draw[->] (x4) -- (1.7,.3);
\node[] at (2.5,1.5) (x5) {$\mathcal{X}_5$};
\draw[->] (x5) -- (1.8,.7);
\node[] at (-2.8,0) (x6) {$\mathcal{X}_6$};
\draw[->] (x6) -- (-1.85,1.05);
\node[] at (-2.7,0.7) (x7) {$\mathcal{X}_7$};
\draw[->] (x7) -- (-1.85,1.25);
\node[] at (-2.5,1.5) (x8) {$\mathcal{X}_8$};
\draw[->] (x8) -- (-1.9,1.4);
\node[] at (-2.7,-0.7) (x9) {$\mathcal{X}_9$};
\draw[->] (x9) -- (-1.9,-1.6);
\node[] at (-2.7,-1.5) (x10) {$\mathcal{X}_{10}$};
\draw[->] (x10) -- (-1.85,-1.75);
\node[] at (-2.7,-1.9) (x11) {$\mathcal{X}_{11}$};
\draw[->] (x11) -- (-1.9,-1.85);
\node[] at (2.8,-0.9) (x2) {$\mathcal{X}_2$};
\draw[->] (x2) -- (1.04,-.3);
\node[] at (2.7,-1.9) (x12) {$\mathcal{X}_{12}$};
\draw[->] (x12) -- (1.93,-1.97);
\end{tikzpicture}
\caption{\label{fig:2} The regions $\{\mathcal{X}_i\}_{i=1}^{12}$ for the  optimal privacy-preserving policy in Theorem~\ref{tho:Problem1_general} for $\range{X}=[-2,2]^2$, $\gamma=2$, and nonlinear query $f(x)=x^\top \mathrm{diag}(1,2)x$. For the optimal policy, $b_i=i-0.5$ for all $1\leq i\leq 12$. }
\end{figure}
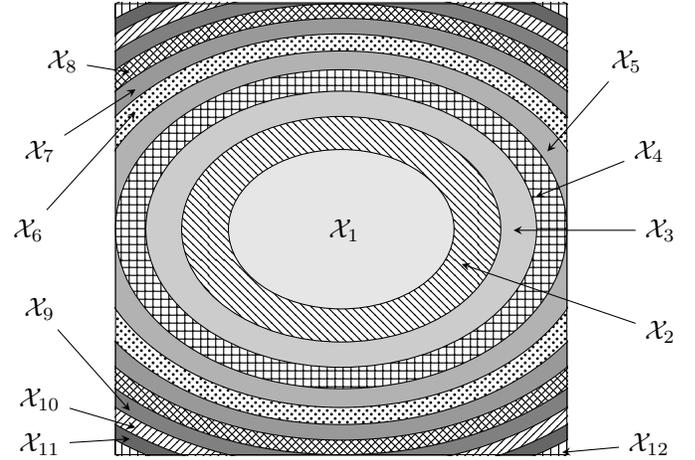

\begin{example}
Consider a simple example in which reporting the average of two real numbers in $[-2,2]$ is of interest. Therefore, the query is $f(x)=\mathds{1}^\top x/2$. First, consider the relaxed problem in~\eqref{eqn:Problem1_relaxed}. Assume that $\gamma=2$. The optimal policy in this case is to quantize $y$ with a uniform quantizer over $[-2,2]$ with $4$ bins, denoted by $g(\cdot)$. Thus,
\begin{align*}
\revise{\tilde{f}}(x)=g(f(x))=
\begin{cases}
-1.5, & -2\leq f(x)<-1, \\
-0.5, & -1\leq f(x)<0,\\
0.5, & 0\leq f(x)<1,\\
1.5, & 1\leq f(x)\leq 2.
\end{cases}
\end{align*}
This function can be rewritten as
\begin{align} \label{eqn:optimal_relaxed_example}
\revise{\tilde{f}}(x)=
\begin{cases}
-1.5, & -4\leq x_1+x_2<-2, \\
-0.5, & -2\leq x_1+x_2<0,\\
0.5, & 0\leq x_1+x_2<2,\\
1.5, & 2\leq x_1+x_2\leq 4,
\end{cases}
\end{align}
where $x_i$ denotes the $i$-th entry of $x$. Now, Theorem~\ref{tho:Problem1_general} can be used to find the optimal privacy-preserving policy for the case with privacy metric in~\eqref{eqn:define:P2}. Figure~\ref{fig:1} illustrates the regions $\{\mathcal{X}_i\}_{i=1}^4$ for the optimal privacy-preserving policy in Theorem~\ref{tho:Problem1_general} for $\range{X}=[-2,2]^2$, $\gamma=2$, and linear query $f(x)=\mathds{1}^\top x/2$. For the optimal policy in Figure~\ref{fig:1}, $b_1=-1.5$, $b_2=-0.5$, $b_3=0.5$, and $b_4=1.5$. It is interesting to note that the optimal policy in Figure~\ref{fig:1} is in fact equal to~\eqref{eqn:optimal_relaxed_example}. Therefore, the relaxation in~\eqref{eqn:Problem1_relaxed} is without loss of generality in this example. This is because $f$ meets the condition of Corollary~\ref{cor:equivalence}. 

\begin{figure}
\centering
\begin{tikzpicture}
\node[] at (0,0) {
\psfrag{x1}[cc][][0.6][0]{$10^{-1}$}
\psfrag{x2}[cc][][0.6][0]{$10^{\,0}$}
\psfrag{x3}[cc][][0.6][0]{$10^{\,1}$}
\psfrag{y1}[cc][][0.6][0]{\quad $0$}
\psfrag{y2}[cc][][0.6][0]{\quad $1$}
\psfrag{y3}[cc][][0.6][0]{\quad $2$}
\psfrag{y4}[cc][][0.6][0]{\quad $3$}
\psfrag{y5}[cc][][0.6][0]{\quad $4$}
\psfrag{z1}[cc][][0.6][0]{\quad $0$}
\psfrag{z2}[cc][][0.6][0]{\quad $2$}
\psfrag{z3}[cc][][0.6][0]{\quad $4$}
\psfrag{z4}[cc][][0.6][0]{\quad $6$}
\psfrag{z5}[cc][][0.6][0]{\quad $8$}
\psfrag{w1}[cc][][0.6][0]{\quad $0$}
\psfrag{w2}[cc][][0.6][0]{\quad $1$}
\psfrag{w3}[cc][][0.6][0]{\quad $2$}
\psfrag{w4}[cc][][0.6][0]{\quad $3$}
\psfrag{w5}[cc][][0.6][0]{\quad $4$}
\includegraphics[width=\linewidth]{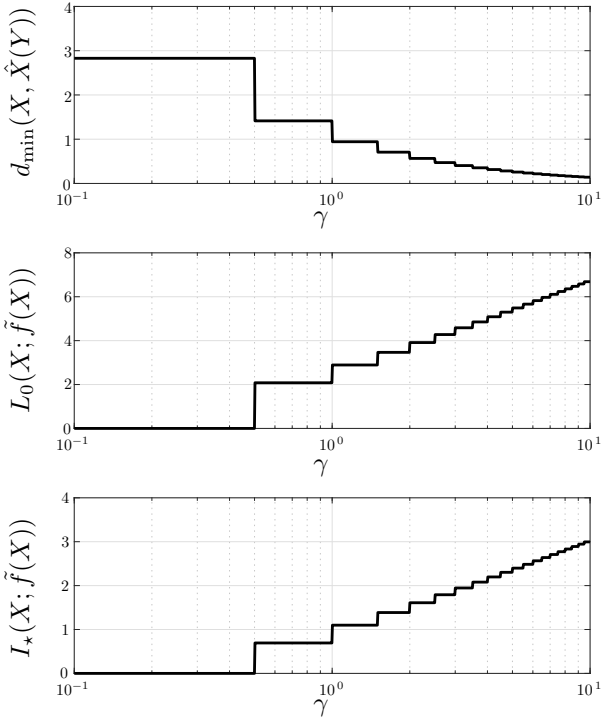}};
\node[] at (0,-4.7) {$\gamma$};
\node[] at (0,-1.5) {$\gamma$};
\node[] at (0,+1.8) {$\gamma$};
\node[rotate=90] at (-4,+0.2) {$L_0(X;\tilde{f}(X))$};
\node[rotate=90] at (-4,+3.5) {$d_{\min}(X,\hat{X}(Y))$};
\node[rotate=90] at (-4,-3.1) {$I_\star(X;\tilde{f}(X))$};
\end{tikzpicture}
\vspace{-.3in}
\label{fig:gamma}
\caption{The effect of constant $\gamma$ on the trade-off between utility $\gamma$ and privacy, captured using the amount of the private information leaked $L_0(X;\tilde{f}(X))$ and $I_\star(X;\tilde{f}(X))$ as well as the adversary's  estimation error $d_{\min}(X,\hat{X}(Y))$. }
\end{figure}

\revise{In~\eqref{eqn:Problem1},  $\gamma$ changes the trade-off between privacy and utility:  by increasing $\gamma$, a larger bound on the quality is required  and privacy guarantee must be weakened. To demonstrate this, consider the example discussed above with a general $\gamma>0$. The optimal policy in this case is to quantize $y$ with a uniform quantizer over $[-2,2]$ with $\lceil 2\gamma \rceil$ bins. Figure~\ref{fig:gamma} illustrates the amount of the private information leaked $L_0(X;\tilde{f}(X))$ and  $I_\star(X;\tilde{f}(X))$, and  the adversary's estimation error $d_{\min}(X,\hat{X}(Y))$. Evidently, as the quality improves ($\gamma$ increases), the privacy guarantee weakens (the amount of the leaked information increases and the adversary's estimation error decreases). }

Now, focus on a non-linear query of the form $f(x)=x_1^2+2x_2^2$. In this case, $\range{Y}=[0,12]$. Therefore, the optimal policy of the relaxed problem in~\eqref{eqn:Problem1_relaxed} for $\gamma=2$ is a uniform quantizer over $[0,12]$ with $12$ bins. Again, use $g$ denote this quantizer. It can be seen that
\begin{align} \label{eqn:optimal_relaxed_example2}
\revise{\tilde{f}}(x)=
i+0.5, \;  i\hspace{-.03in}\leq \hspace{-.03in}x_1^2\hspace{-.03in}+\hspace{-.03in}2x_2^2\hspace{-.03in}<\hspace{-.03in}i+1, \; \forall i\in\{0,\dots,11\},
\end{align}
Again, Theorem~\ref{tho:Problem1_general} can be used to find the optimal privacy-preserving policy in this case. Figure~\ref{fig:2} illustrates the regions $\{\mathcal{X}_i\}_{i=1}^{12}$ for the optimal privacy-preserving policy in Theorem~\ref{tho:Problem1_general} for $\range{X}=[-2,2]^2$, $\gamma=2$, and non-linear query $f(x)=x^\top \mathrm{diag}(1,2)x$. For the optimal policy, $b_i=i-0.5$ for all $1\leq i\leq 12$. Similarly,  the optimal policy in \revise{Figure~\ref{fig:2}} is equal to~\eqref{eqn:optimal_relaxed_example2} and thus, the relaxation in~\eqref{eqn:Problem1_relaxed} is again without loss of generality as $f$ meets the condition of Corollary~\ref{cor:equivalence}.
\end{example}

\revise{
\begin{example} Consider a practical example in which the private dataset contains the height of $n_x$ individuals in the range of $[100,250]$ centimetres. The submitted query is to compute the average height of the individuals in the dataset, i.e., $f(x)=\mathds{1}^\top x/n_x$. Following the results of the paper,  the optimal privacy-preserving policy is to quantize $f(x)$ using a uniform quantizer over $[100,250]$ with $\lceil 75\gamma\rceil$ bins. In this case, $d_{\min}(X,\hat{X}(\tilde{f}(X)))= 150\sqrt{2}/\lceil 75\gamma\rceil$, which is independent of $n_x$. This is because the worst-case in terms of preserving privacy occurs in a society with $n_x-2$ individuals whose heights are equal to $250$ and two individuals whose heights are within $(250-150/\lceil 75\gamma\rceil,250]$. To be able to guarantee an error of at least $10$ centimetres for the adversary, $\gamma$ must be selected to be larger than $22/75\approx 0.2933$. 
\end{example}
}

\section{Relationship to Other Notions of Privacy} \label{sec:kanonimity}
In this section, the privacy credentials of $k$-anonymity is analyzed using the measures of privacy in~\eqref{eqn:define:P12}. Consider a dataset $x\in \mathbb{X}\subseteq\mathbb{R}^{n\times m}$ with $n$ rows (entries or individuals) and $m$ columns (attributes). The following argument can easily be extended to other sets and is thus without loss of generality. 

\begin{definition}[$k$-anonymity~\cite{samarati1998protecting,sweeney2002k, samarati2001protecting}] A release of data is said to have the $k$-anonymity property if the information for each individual contained in the release cannot be distinguished from at least $k-1$ individuals whose information also appear in the release.
\end{definition}

\begin{proposition} \label{prop:kanonymity} There exists a reporting function $\revise{\tilde{f}}(X)$ admitting $k$-anonymity property for which the following holds: 
\begin{itemize}
\item $d_0(X|f(X))=0$ (and thus $L_0(X;f(X))=h_0(X)$);
\item $I_\star(X;f(X))=\infty$. 
\end{itemize}
\end{proposition}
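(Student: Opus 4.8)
The plan is to exhibit an explicit reporting function $\tilde{f}$ that satisfies the $k$-anonymity property by construction, yet for which both privacy measures collapse. The key structural insight driving the construction is the tension identified in Example~\ref{example:1}: $k$-anonymity is a property about \emph{grouping} rows into indistinguishable equivalence classes (via suppression and generalization), but it places no constraint on whether the released quasi-identifiers leave any freedom in reconstructing the \emph{sensitive} columns. So I would build a dataset and a release in which the quasi-identifiers are generalized coarsely enough to form groups of size at least $k$ — thereby guaranteeing the $k$-anonymity property — while a sensitive attribute is released essentially verbatim (or suffers a homogeneity phenomenon within a group, as in~\cite{1617392}).

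The concrete steps, in order, are as follows. First I would fix the attribute structure: partition the $m$ columns into quasi-identifiers and at least one sensitive column, and specify a generalization/suppression map $\tilde{f}$ on the quasi-identifiers so that every realized quasi-identifier tuple is shared by at least $k$ rows; this discharges the $k$-anonymity requirement directly from the definition. Second, for the leakage measure $\mathcal{P}_1$, I would compute $\range{X \mid \tilde{f}(x)}$ for a particular output and argue that, because the sensitive column is passed through unaltered for at least one group, the conditional range collapses to a single point (or a measure-zero set), giving $\mu(\range{X\mid \tilde{f}(x)})$ as small as possible. Taking the $\essinf$ over outputs then yields $d_0(X\mid f(X))=0$ (using the convention $\log(1)=0$, after normalizing the ambient set so the relevant conditional range has unit measure, or more robustly arranging $\mu(\range{X})$ and the degenerate conditional range to make $d_0=0$ exactly), whence $L_0(X;f(X))=h_0(X)-0=h_0(X)$ as claimed. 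Third, for the maximin measure $\mathcal{P}_2$, I would apply the same mechanism used in Example~\ref{example:1}'s continuation and in the Solution Class theorem: if $\tilde{f}$ acts as the identity on a continuum of points along some direction in $\range{X}$, then no two distinct such points are taxicab connected in $\range{X,Y}$, forcing $|\mathfrak{F}(X,Y)|=\infty$ and hence $I_\star(X;f(X))=\infty$.

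The main obstacle I anticipate is getting the measure-theoretic bookkeeping for $d_0=0$ to be genuinely clean rather than merely heuristic. The quantity $d_0(X\mid f(X))=\essinf_y \log(\mu(\range{X\mid y}))$ is a logarithm of a Lebesgue measure, so to hit the value $0$ exactly (rather than $-\infty$, which is what Example~\ref{example:1} produced) I must engineer a release for which the worst-case conditional range has measure exactly one, not zero. This is the delicate point: a naive ``release the sensitive column verbatim'' gives $\mu=0$ and hence $d_0=-\infty$, which is a \emph{different} (stronger) failure than the stated $d_0=0$. I expect the fix is to generalize the sensitive attribute just coarsely enough that each conditional range is a unit-measure set — large enough that the log is zero but structured so that the identity-like behavior along one coordinate still defeats taxicab connectivity and drives $I_\star=\infty$. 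Reconciling these two requirements simultaneously — a nondegenerate conditional range for $\mathcal{P}_1$ yet an infinitely fine taxicab partition for $\mathcal{P}_2$ — is where the construction must be chosen with care, and I would verify both claims on the \emph{same} explicit $\tilde{f}$ to honor the statement's use of a single reporting function.
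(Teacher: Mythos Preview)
Your high-level plan---exhibit a $k$-anonymous release that still passes a continuum of values through essentially unchanged, forcing both the conditional range to degenerate and the taxicab partition to be infinite---is exactly the right idea, and your mechanism for $I_\star=\infty$ matches the paper's. The paper's construction is, however, more direct than yours: rather than splitting columns into quasi-identifiers and sensitive attributes and generalizing the former, it simply considers datasets whose first $k$ rows are \emph{identical}. For such datasets one may release those $k$ rows verbatim (and apply any $k$-anonymous map to the remaining rows); the release is trivially $k$-anonymous because the $k$ identical rows already form an indistinguishable group, yet the conditional range $\range{X\mid \tilde f(x)}$ pins down those $k$ rows exactly and hence has Lebesgue measure zero. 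Varying the common value of those identical rows over a continuum gives uncountably many taxicab-isolated pieces, so $I_\star=\infty$.

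Where your proposal goes astray is in the last paragraph. You correctly flag that a verbatim release gives $\mu(\range{X\mid y})=0$ and hence $d_0=-\infty$ rather than the stated $d_0=0$, and you then propose to re-engineer the release so that the conditional range has measure exactly one. Do not pursue this: it cannot be reconciled with $I_\star=\infty$. Indeed, distinct members $\mathcal{A}_i$ of the taxicab partition have disjoint $X$-projections (two points sharing an $x$-coordinate are taxicab connected), and every $\range{X\mid y}$ lies inside one such projection; if there are infinitely many partition members and $\mu(\range{X})<\infty$, their projection measures must have infimum zero, forcing $d_0=-\infty$. The paper's own proof in fact derives $\mu(\range{X\mid \tilde f(x)})=0$ and then asserts ``$d_0(X\mid \tilde f(X))=0$''; given the convention $\log 0=-\infty$ used earlier in Example~\ref{example:1}, this is an inconsistency in the paper (yielding $L_0=+\infty$ rather than $h_0(X)$), not a target you should try to hit. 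Just build the straightforward release, record $\mu=0$, and conclude $d_0=-\infty$ and $I_\star=\infty$.
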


\begin{proof}
Consider the case where $x$ is a dataset that has $k$ identical individuals.
Let the first $k$ rows denote the identical individuals. This is without the loss of generality as otherwise the rows can be swapped. Let $f:\mathbb{R}^{n\times m}\rightarrow\mathbb{R}^{n\times m}$ be any $k$-anonymous reporting function. Assume that the $i$-th row of $f(x)$ is report corresponding to the $i$-th row of $x$. This is again without the loss of generality as otherwise the output rows can be swapped. Construct $\revise{\tilde{f}}$ such that 
\begin{align*}
\revise{\tilde{f}}(x)=
\begin{bmatrix}
x_1 \\
\vdots \\
x_k \\
\begin{bmatrix}
0_{(n-k)\times n} & I_{n-k}
\end{bmatrix}
f(x)
\end{bmatrix}.
\end{align*}
By construction $\revise{\tilde{f}}$ is also a $k$-anonymous reporting function. However, 
\begin{align*}
\range{X|\revise{\tilde{f}}(x)}=\range{X|f(x)}\cap \left\{\begin{bmatrix}
w \\
z
\end{bmatrix}\in\mathbb{X} \,\bigg|\, w=\begin{bmatrix}
x_1 \\
\vdots \\
x_k
\end{bmatrix} \right\},
\end{align*}
which shows that $\mu(\range{X|\revise{\tilde{f}}(x)})=0$. Thus, $d_0(X|\revise{\tilde{f}}(X))=0$. Finally, noting that $\range{X|\revise{\tilde{f}}(x)}$ must be included in the taxicab partitions for all choices of $x_1=\dots=x_k$, $|\mathfrak{F}(X,\revise{\tilde{f}}(X))|=+\infty$. This shows that $I_\star(X;\revise{\tilde{f}}(X))=+\infty$. 
\end{proof}

Proposition~\ref{prop:kanonymity} shows that $k$-anonymity is not private. This is because of the homogeneity attack~\cite{1617392}, i.e., attacks that leverage the cases in which all the values for a sensitive value within a set of $k$ records are identical. In such cases, even though the data has been $k$-anonymized, the sensitive value for the set of $k$ records may be exactly predicted. Such cases are explored to prove Proposition~\ref{prop:kanonymity}.

\section{Conclusions and Future Work}
\label{sec:conclusions}

A deterministic privacy metric  using non-stochastic information theory was presented. It was assumed that anyone can submit a query to a trusted server with access to a non-stochastic uncertain private data. Optimal privacy-preserving policy was proved to be a quantized version of the output of the submitted query. Finally, it was proved that $k$-anonymity is not privacy-preserving using the proposed privacy metric. Future work can focus on analysing non-scalar queries as well as demonstrating the performance of the method on publicly available datasets.


\bibliographystyle{ieeetr}
\bibliography{sample-bibliography}

\end{document}